\definecolor{grisclair}{rgb}{0.85,0.85,0.85}
\definecolor{crimson}{rgb}{0.6,0,0}
\definecolor{rred}{rgb}{0.7,0,0.1}
\definecolor{greenrb}{rgb}{0.2,0.6,0.2}
\newcommand{\mk}{\color{black}}
\newcommand{\ds}[1]{ \displaystyle{#1}}
\newcommand{\f}{\frac}
\newcommand{\dx}{\Delta x}
\newcommand{\dep}{\Delta p}
\newcommand{\de}{\partial}
\newcommand{\De}{\Delta}
\newcommand{\M}{\mathcal{M}}
\newcommand{\mbf}[1]{\mathbf{#1}}
\newcommand{\mc}[1]{\mathcal{#1}}
\newtheorem{prop}{Proposition}[section]
\newtheorem{rem}{Remark}[section]
\title[Numerical weather prediction with primitive equation]{Numerical weather prediction in two dimensions with topography, using a finite volume method}
\author[A. Bousquet]{Arthur Bousquet}
\address[Arthur Bousquet]{Institute for Scientific Computing and Applied Mathematics, Indiana University, Bloomington, Indiana, USA.}
\email{arthbous@indiana.edu}
\author[M. Chekroun]{Micka\"el D. Chekroun}
\address[Micka\"el D. Chekroun]{Department of Mathematics, University of Hawaii at Manoa, Honolulu, HI, USA, and Department of Atmospheric and Oceanic Sciences, University of California, Los Angeles, CA, USA.}
\email{mchekroun@atmos.ucla.edu}
\email{mdchekroun@math.hawaii.edu}
\author[Y. Hong]{Youngjoon Hong}
\address[Youngjoon Hong]{Institute for Scientific Computing and Applied Mathematics, Indiana University, Bloomington, Indiana, USA.}
\email{ hongy@indiana.edu}
\author[R. Temam]{Roger Temam}
\address[Roger Temam]{Institute for Scientific Computing and Applied Mathematics, Indiana University, Bloomington, Indiana, USA.}
\email{temam@indiana.edu}
\author[J. Tribbia]{Joseph Tribbia}
\address[Joseph Tribbia]{ Climate Dynamics and Predictability (CDP) section in the Division of Climate and Global Dynamics (CGD) at the National Center for Atmospheric Research (NCAR)}
\email{tribbia@ucar.edu}
\date{\today}
\numberwithin{equation}{section} 
\numberwithin{figure}{section}
\numberwithin{table}{section}
\begin{document}

\maketitle
\tableofcontents
\newpage
\begin{abstract}
We aim to study a finite volume scheme to solve the two dimensional inviscid 
primitive equations of the atmosphere with {\mk humidity and saturation, in presence of topography and subject to physically plausible boundary conditions to the system of equations.}
{\mk In that respect, a version of a projection method is introduced to enforce the compatibility condition on the horizontal velocity field, which comes from the boundary conditions. The resulting scheme allows for a significant reduction of the errors near the topography when compared to more standard finite volume schemes.}
{\mk In the numerical simulations, we first present the associated good convergence results that are satisfied by the solutions simulated by our scheme  when compared to particular analytic solutions.}
We then {\mk report on numerical} 
experiments using realistic parameters.
{\mk  Finally, the effects of a random small-scale forcing on the velocity equation is numerically investigated.  The numerical results show that such a forcing is responsible for recurrent large-scale patterns to emerge in the temperature and velocity fields. }
\end{abstract}

\section{Introduction}
We consider the inviscid primitive equations of the atmosphere with humidity.
Besides its mathematical importance, the inviscid primitive equations are used for numerical weather predictions in geophysics.
Considerable effort and attention have been devoted to this subject over the last few decades.
The theory of the inviscid primitive equations usually does not resemble the theory of the Euler equations.
In addition, it is well-known that the inviscid primitive equations are ill-posed for any set of boundary conditions of local type; see e.g.~\cite{OS78} and \cite{TT05}.
Hence the theoretical and numerical understanding of this topic is very scarce and remains as an important open problem.

In the presence of the topography and the divergence free term in the primitive equations (conservation of mass), the numerical methods require a careful adaptation.
We propose a suitable finite volume scheme to overcome these difficulties.
In addition, we consider a compatibility condition which comes from the boundary conditions.
To enforce the compatibility condition, we introduce a version of a projection method.
We perform various numerical simulations which include  deterministic and stochastic cases. The aim of this article is to see the influence of the mountain;
starting with an unsaturated humidity, we observe that the rain appears near 
the mountain.
In addition, {\mk by incorporating an additive noise to the model formulation, it is numerically illustrated that recurrent large-scale patterns can emerge  from the combined effect of a random small-scale forcing with those of the topography.}

Since Lions, Temam, and Wang proposed new formulations of the primitive equations in \cite{LTW}, the mathematical studies of primitive equations have been developed in many different directions.
Considering the viscosity, one can find many mathematical results for the primitive equations in e.g. \cite{CT2}, \cite{KG06}, \cite{KG07}, and \cite{PTZ08}.
In the absence of viscosity, two of authors of this article have studied these equations with a set of nonlocal boundary conditions in both theoretical and computational sides; see e.g. \cite{CLRTT07}, \cite{CST}, \cite{CSTT}, \cite{CTT}, \cite{RTT05} and \cite{RTT07}.
The primitive equations with humidity have been investigated in the classical references \cite{G}, \cite{H}, \cite{H2}, and \cite{RY}.
The authors from \cite{CFTT} and \cite{CT} have proposed the problem of water vapor in presence of saturation for the simplified model.
We focus, in this article, on a numerical method for the realistic and complex model of the inviscid primitive equations.

This article is organized as follows.
In Section \ref{sec2} we present our model equations and physically plausible boundary conditions. We also discuss our projection method for the velocity, and how the pressure relates to the topography.
Then in Section \ref{sec3}, we present a finite volume method to solve numerically the model equations.
Due to the topography, the classical finite volume schemes produce errors near the topography.
To resolve this problem, we propose a new scheme which is a modified 
Godunov type method that exploits the discrete finite-volume derivatives by 
using the so-called Taylor Series Expansion Scheme (TSES) introduced in 
\cite{BGHL} and \cite{GT}.

Finally, in Section \ref{sec4}, {\mk we report on numerical results based on the 
scheme introduced in Section \ref{sec3}, in the deterministic as well as 
stochastic context.} {\mk In the deterministic setting, it is shown for 
physically plausible parameter values, how the projection method 
proposed in Section \ref{ss2.3} allows for the compatibility condition \eqref{eq2.2.7b}-\eqref{eq2.2.7c} \textemdash\, that the vertical integration of the 
horizontal component of the model's velocity field must 
satisfy \textemdash\, to be satisfied to a better numerical accuracy compared to when the projection method is not used.}
The effects of a small-scale additive noise on the (horizontal 
component) of the velocity equation is then numerically investigated. As a main 
result, it is shown that such a small-scale random forcing  can significantly 
impact the model's dynamics at the large scales, leading to the appearance of 
waves that although evolving irregularly in time, manifest characteristic  
frequencies  across a low-frequency band which is more pronounced in the 
temperature field than in the velocity field.
 
 \section{The primitive equations with humidity and saturation}
\label{sec2}
Our goal in this Section is to introduce the primitive equations of the atmosphere 
with humidity and saturation, then describe the boundary conditions of the 
problems. 

The two dimensional inviscid model, in coordinates $(x,p)$, under consideration  accounts
for the conservation of horizontal momentum (in one direction), conservation of
mass and energy. The hydrostatic equation is introduced as well as the equation
of conservation of water vapor with saturation; see e.g. \cite{CBT14}, \cite{CFTT}, \cite{H},
\cite{H2}, \cite{Po}, \cite{RY}, \cite{LTW}, and \cite{PTZ08}. In the simple
humidity model that we consider, following \cite{H}, \cite{H2}, and \cite{RY}, the
water vapor leaves the system when it condenses.

\subsection{The model equations}
\label{sec2.1}
We consider the two-dimensional inviscid primitive equations which depends on two spatial like 
variables, $x \in [0,L]$ and $p \in [p_A,p_B]$, the pressure. We denote by $\M$ the (pseudo) 
spatial domain $\M=(0,L)\times(p_A,p_B)$; the function $p_B=p_B(x,t)$ refers to the pressure 
at the bottom of the atmosphere (if we do not have topography, we simply set $p_B(x,t)=p_0=1000$), 
and $p_A$ refers to the pressure at the top of the atmosphere. We choose the value $p_A = 200$. The two-dimensional inviscid primitive 
equations then read:
\begin{equation}
\label{eq1.1.1}
\left\{ 
\begin{array}{l}
\displaystyle{\frac{\de T}{\de t}+u\frac{\de T}{\de x}+\omega \frac{\de T}{\de p}=\frac{\omega}{p} \left( \frac{RT}{C_p} -\delta \frac{LF}{C_p}\right) },\\
\displaystyle{\frac{\de q}{\de t}+u\frac{\de q}{\de x}+\omega \frac{\de q}{\de p}=\delta\frac{F}{p}\omega },\\
\ds{\frac{\de u}{\de t}+u\frac{\de u}{\de x}+\omega\frac{\de u}{\de p} +\phi_x=0,}\\
\ds{\frac{\de \omega}{\de p}+\frac{\de u}{\de x}=0,}\\
\ds{\frac{\de \phi}{\de p}=-\frac{RT}{p},}\\
\ds{\phi=zg,~~z=z(x,p,t).}
\end{array} \right.
\end{equation}
 with
 \begin{itemize}
  \item $\delta=H(- \omega) H(q-q_s)$, where $H$ is the Heaviside function $H(x)
= \frac{1}{2}(1 + sign(x))$, based on equation (9.14) in \cite{H}; see also
\cite{CT}, \cite{CFTT}. Note that in the first equation \eqref{eq1.1.1}, $-H(-\omega)\omega=\omega^-=\max(-\omega,0)$
does not have a jump at $\omega = 0$.
 \item $L(T)=2.5008\times10^6-2.3\times10^3 (T-275)$ $J~kg^{-1}$ is the latent heat of vaporization (see (A4.9) in \cite{G}).
 \item $R=287~J~K^{-1}~kg^{-1}$ is the gas constant for dry air (p. 597 in \cite{G}).
 \item $R_v=461.50~J~K^{-1}~kg^{-1}$ is the gas constant for water vapor (p. 597 in \cite{G}).
 \item $p_A \in [0,200]$, usually $\simeq 200$ (chosen).
 \item $p_0=1000$ (chosen).
 \item $C_p=1004~J~K^{-1}~kg^{-1}$ is the specific heat of dry air at constant pressure (p. 475 in \cite{H}).
 \item $F(T,p)$ is given by equation (9.13) in \cite{H}:
\begin{equation}
 F(T,p)=q_s(T,p)T\left(\frac{LR-C_pR_vT}{C_pR_vT^2 +q_s(T,p)L^2(T)}\right).
\end{equation}
\item  $q_s(T,p) $ is the saturation specific humidity. From equation (9.6) in \cite{H} we have:
\begin{equation}\label{qs}
 q_s(T,p)=\frac{0.622e_s(T)}{p},
\end{equation}
 where $e_s(T)$ is the saturation vapor pressure. We approximate its value with equation (2.17) in \cite{RY}:
\begin{equation}\label{eq_es}
 e_s(T)=6.112\exp \left(\frac{17.67 (T-273.15)}{T-29.65} \right).
\end{equation}
\end{itemize}
Note that $\eqref{eq1.1.1}_1$, $\eqref{eq1.1.1}_3$ and $\eqref{eq1.1.1}_4$ express the conservation of energy, momentum in the $x$ direction and mass, respectively.\\
We aim to find:
\begin{itemize}
 \item $T=T(x,p,t)$: local temperature. 
  \item $q=q(x,p,t)$: specific humidity.
\item $u=u(x,p,t)$: the velocity along the $x$ axis.
\item $\omega=\omega(x,p,t)$: the vertical velocity in the $(x,p)$ system 
$(\omega =\frac{dp}{dt})$.
 \end{itemize}
The variables $\omega$ is a diagnostic variables which will be 
computed 
using the prognostic variables $u$. We treat the geopotential $\phi=\phi(x,p,t)$ separately using $\eqref{eq1.1.1}_5$.\\
\\
%%%%%%%%%%%%%%%%%%%%%%%%%%%%%%%%%%%%%%%%%%%%%%%%%%%%%%%%%%%%%%%%%%%%%%%%%%%%%%%%%%%%%%
\noindent
\textbf{The boundary conditions.}
We supplement these equations with the physically relevant boundary conditions. At the top $(p = p_A)$ and bottom $(p=p_B)$ we consider the impermeable boundary conditions:
\begin{equation}\label{e:wtop}
	      \omega =0, \text{  at  } p=p_A,
\end{equation}
\begin{equation}
	      (u,\omega)\cdot\mathbf n=0, \text{  at  } p = p_B,
\label{e:impermeable}
\end{equation}
and we do not assign any boundary condition for $T$ and $q$, at $p=p_A,p_B$.
Since $\phi_x=z_xg$, from equation $\eqref{eq1.1.1}_6$, we assume that at $p=p_A$
\begin{equation}
 \phi_x=0.
\label{e:phi_pa}
\end{equation}
At $x=0$ we consider Dirichlet boundary conditions for $T$, $q$ and $u$. For $\omega$ we impose homogeneous Neumann boundary condition. At $x=L$ we consider the homogeneous Neumann boundary conditions for $T$, $q$, $u$ and $\omega$, that is,
\begin{equation} \label{e:bc_1st}
\begin{array}{ll}
\vspace{0.1in}
T \Big |_{x=0} = g_T(p), & \quad \ds{\frac{\de T}{\de x} \Big |_{x=L} = 0},\\
\vspace{0.1in}
 q \Big |_{x=0} = g_q(p), & \quad \ds{\frac{\de q}{\de x} \Big |_{x=L} = 0},\\
 \vspace{0.1in}
 u \Big |_{x=0} = g_u(p), & \quad \ds{\frac{\de u}{\de x} \Big |_{x=L} = 0},\\
 \vspace{0.1in}
 \ds{\frac{\de \omega}{\de x} \Big |_{x=0} = 0}, & \ds{\quad \frac{\de \omega}{\de x} \Big |_{x=L} = 0},
\end{array}
\end{equation}
where $g_T$, $g_q$, and $g_u$ are sufficiently smooth functions defined on $[p_A,p_B(0)]$.
We also assume the following conditions on $z_B$ and $p_B$ to make our numerical computations easier:
\begin{equation}
	  \left.\frac{\de z_B}{\de x}\right|_{x=0}=\left.\frac{\de z_B}{\de x}\right|_{x=L}=\left.\frac{\de p_B}{\de x}\right|_{x=0}=\left.\frac{\de p_B}{\de x}\right|_{x=L}=0.
\label{eq_topo}
\end{equation}
This means that the topography is flat near $x=0$ and $L$ (see e.g. Figure \ref{topo1}).

The boundary conditions \eqref{e:wtop} and \eqref{e:impermeable} plays an important role in the computation of $\omega$ from $u$.
Indeed, with the boundary condition $\omega(p_A)=0$ and $\eqref{eq1.1.1}_4$, we obtain
\begin{equation}
	  \label{eq1.1.3}
	  \omega=-\int_{p_A}^p\frac{\de u}{\de x}dp.
\end{equation}

\begin{figure}[h]
\centering \includegraphics[scale=1]{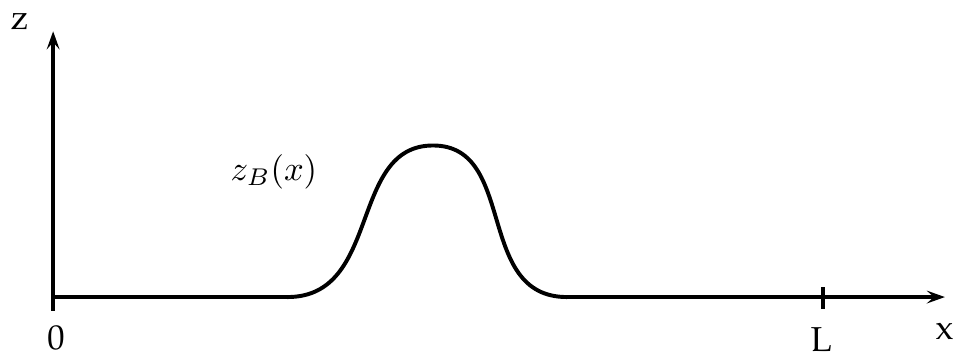}
\caption{Example of topography}
\label{topo1}
\end{figure}
\begin{rem}
\label{r1.1}
We assume that $\omega=0$ at the top ($p=p_A$), but, in general, 
$\omega(p_B)\neq 0$ in the presence of topography.
We have two boundary conditions for $\omega$: at top \eqref{e:wtop} and 
at the bottom \eqref{e:impermeable}. We directly enforce the boundary 
condition at the top by computing $\omega$ in \eqref{eq1.1.3}. We also need to 
verify that the equation \eqref{eq1.1.3}  at $p=p_B$ will satisfy 
\eqref{e:impermeable}, for this reason we propose a compatibility condition for 
$u$.

We first rewrite equation \eqref{e:impermeable} as 
\begin{equation}
\label{eq2.2.7}
	  \omega(p_B)= u(p_B)\frac{\de p_B}{\de x}.
\end{equation}
In view of \eqref{eq2.2.7} and \eqref{eq1.1.3} we deduce that
\begin{equation}
\begin{split}
	  \frac{\de}{\de x}\left(\int_{p_A}^{p_B}udp\right)& =\int_{p_A}^{p_B} \frac{\de
	  u}{\de x} dp+u(x,p_B(x),t)\frac{\de p_B}{\de x}\\
	  & = \int_{p_A}^{p_B} \frac{\de u}{\de x} dp+\omega(p_B) = 0 \quad (\text{by }\eqref{eq1.1.3}).
\end{split}
\label{eq2.2.7b}
\end{equation}
Thus, we see that $u$ satisfies the following compatibility condition:
\begin{equation}
\label{eq2.2.7c} 
	  \frac{\de}{\de x}\left(\int_{p_A}^{p_B}udp\right)=0.
\end{equation}
\end{rem}
%%%%%%%%%%%%%%%%%%%%%%%%%%%%%%%%%%%%%%%%%%%%%%%%%%%%%%%%%%%%%%%%%%%%%%%%%%%%%%%%%%%%%%%%%%%%%%%%%%%%%%%%%
\noindent
We recall our system of equations and we interpret $\eqref{eq1.1.1}_6$ as 
an equation for $z$ and $p$, where $z\in[z_B(x),z_A]$,
\begin{equation}
\label{eq1.1.4}
	  \phi(x,p,t)=z(x,p,t)g.
\end{equation}
\noindent
If we know $\phi$ and one of the characteristics of the topography, either $z_B(x)$ or $p_B(x)$, we can find the other one using \eqref{eq1.1.4}.
For the computation of $\phi_x$, we  differentiate in $x$ equation $\eqref{eq1.1.1}_5$ and we obtain
\begin{equation}
	  \phi_{xp}=-\frac{RT_x}{p},
 \label{e:phi_2}
\end{equation}
then, from the boundary condition \eqref{e:phi_pa}, we deduce
\begin{equation}
	  \phi_{x}=-\int_{p_A}^p\frac{RT_x}{p}.
 \label{e:phi_3}
\end{equation}

%%%%%%%%%%%%%%%%%%%%%%%%%%%%%%%%%%%%%%%%%%%%%%%%%%%%%%%%%%%%%%%%%%%%%%%%%%%%%%%%%%%%%%%%%%%%%%%%%%%%%%%%%
\subsection{Reformulation of the equations}
\label{sec2.2}
In this subsection we rewrite and simplify \eqref{eq1.1.1} in view of the numerical simulations in Sections \ref{sec3} and \ref{sec4}.
Thanks to the divergence free condition $\eqref{eq1.1.1}_4$, the equations of $T$, $q$ and $u$
in $\eqref{eq1.1.1}_1$-$\eqref{eq1.1.1}_3$ read:
\begin{equation}
\begin{cases}
\label{rewrite1}
	  \displaystyle{\frac{\de T}{\de t}+\nabla_{x,p}(uT,\omega T)=\frac{\omega}{p} \left( \frac{RT}{C_p} -\delta \frac{LF}{C_p}\right)},\\
	  \displaystyle{\frac{\de q}{\de t}+\nabla_{x,p}(uq,\omega
	  q)=\delta~\frac{F}{p}~\omega},\\
	  \ds{\frac{\de u}{\de t}+\nabla_{x,p}(uu,\omega u) +\phi_x=0.}\\
\end{cases}
\end{equation}
We assume that $T$ and $\phi$ are perturbations of a stratified configuration, ($\bar T(p)$, $\bar \phi(p)$) satisfying the hydrostatic equation
\begin{equation}
\label{phibar}
	  \frac{\de\bar \phi}{\de p}=-\frac{R\bar T}{p}.
\end{equation}
Noting that $z_B(x)g = \phi(x,p_B(x,t),t)$, we deduce 
\begin{equation}
	  z_B(x)g \simeq \bar\phi(p_B(x,t)),
\end{equation}
and this implies that $p_B$ does not depend on $t$ as announced, that is:
\begin{equation}
	  p_B(x,t) = p_B(x).
\label{pb1}
\end{equation}
For the sake of simplicity, we set
\begin{equation}
\begin{split}
	  &\mbf u = (T,q,u),\enspace \mbf S=\left(\frac{\omega}{p} \left( \frac{RT}{C_p} 
	  -\delta \frac{LF}{C_p}\right),\delta\frac{F}{p}\omega,0\right),\\
	  & \mbf \Phi_x =(0,0,\phi_x), \enspace \mbf G=(g_T,g_q,g_u).
	  \end{split}
	  \label{e:notation}
\end{equation}

In view of  of the notations above, \eqref{eq1.1.3}, \eqref{e:phi_3}, and 
\eqref{rewrite1} we arrive at the following boundary value problem:
\begin{equation}
\begin{cases}
	  \displaystyle{\frac{\de \mbf u}{\de t}+\nabla_{x,p}(u\mbf u,\omega 
	  \mbf u)+\mbf \Phi_x=\mbf S},\\
	  \ds{\omega=-\int_{p_A}^p\frac{\de u}{\de x}dp,}\\
	  \ds{\phi_x=-\int_{p_A}^p\frac{RT_x}{p},}
\end{cases}
\label{bvp}
\end{equation}
and the boundary conditions are the same as {in \eqref{e:bc_1st}}:
\begin{equation}
\label{bc}
\begin{split}
	  & \frac{\de z_B}{\de x}=\frac{\de p_B}{\de x}=0, \text{  at  } x=\{0,L\},\\
	  & \omega = u\frac{\de p_B}{\de x},\text{  at  } p=p_B, \qquad
	    \omega=\phi_x=0, \text{  at  } p=p_A,\\
	  & \mbf u = \mbf G(p), \text{  at  } x=0, \qquad
	    \frac{\de \mbf u}{\de n} = 0, \text{  at  } x=L, \\
	  & \frac{\de \omega}{\de n}=0, \text{  at  } x=\{0,L\}.
\end{split}
\end{equation}

\subsection{Projection of $u$}
\label{ss2.3}
In this section, we develop a projection method to ensure the compatibility condition 
\eqref{eq2.2.7c} in Remark \ref{r1.1}; as we will see, this projection method is similar -- but 
simpler -- than the projection method in incompressible fluid mechanics Temam \cite{T} and Chorin \cite{CA}.
Let $\tilde{u} \in L^2(\M)$ be the solution in $\eqref{bvp}_3$ that does not satisfy
\eqref{eq2.2.7c}. We now construct its projection $u= P \tilde u$ on the appropriate set of 
functions that satisfy the compatibility condition (\ref{eq2.2.7c}):
\begin{equation}
 PL^2=\left\{ v \in L^2(\M),\enspace \frac{\de}{\de x}\left(\int_{p_A}^{p_B}vdp\right)=0 \right\}.
\end{equation}
We denote by $(I-P)L^2$ the orthogonal complement of $PL^2$ in $L^2(\M)$.
\begin{prop}
The orthogonal complement of $PL^2$ in $L^2(\M)$ can be characterized as follows
\begin{equation}\label{e2.14}
(I-P)L^2=\left\{\alpha, \enspace \alpha=\alpha(x) \in L^2([0,L]), \int_0^L\alpha dx=0\right\}.
\end{equation}
\end{prop}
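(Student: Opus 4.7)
The plan is to prove the two inclusions $\mathcal{A} \subset (PL^2)^\perp$ and $(PL^2)^\perp \subset \mathcal{A}$ separately, where $\mathcal{A}$ denotes the right-hand side of \eqref{e2.14} (viewing $\alpha = \alpha(x)$ as an element of $L^2(\M)$ independent of $p$). First I would check that $\mathcal{A} \subset L^2(\M)$: since $p_B$ is bounded, $\|\alpha\|_{L^2(\M)}^2 = \int_0^L |\alpha(x)|^2(p_B(x)-p_A)\,dx$ is finite whenever $\alpha \in L^2([0,L])$.

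For the easy inclusion, take $\alpha \in \mathcal{A}$ and $v \in PL^2$. Setting $V(x) = \int_{p_A}^{p_B(x)} v(x,p)\,dp$, the definition of $PL^2$ forces $V'\equiv 0$, so $V$ is a constant $C$. Fubini then gives
\begin{equation*}
    \int_\M \alpha\, v\,dx\,dp = \int_0^L \alpha(x)\,V(x)\,dx = C\int_0^L \alpha(x)\,dx = 0,
\end{equation*}
so $\alpha \in (PL^2)^\perp$.

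For the converse, let $w \in (PL^2)^\perp$; I want to show that $w$ is a function of $x$ alone with zero mean. The hard part is that $\M$ is not a rectangle because $p_B$ depends on $x$, so test functions of the form $\phi(x)\psi(p)$ must be localized carefully. I would fix an arbitrary $x_0 \in (0,L)$, pick a neighborhood $I \ni x_0$ on which $p_B(x) \geq p_B(x_0) - \varepsilon$, and choose $\phi \in C_c^\infty(I)$ together with $\psi \in C_c^\infty((p_A, p_B(x_0)-\varepsilon))$ satisfying $\int \psi\,dp = 0$. The product $v = \phi\psi$ (extended by zero) lies in $PL^2$ since its vertical integral vanishes identically. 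Orthogonality with $w$ yields $\int \phi(x)\bigl(\int w(x,p)\psi(p)\,dp\bigr)dx = 0$, and letting $\phi$ and $\psi$ vary gives $w(x,\cdot)$ constant a.e.\ in $p$ on the relevant slab. Exhausting $\M$ by such neighborhoods shows that $w(x,p) = \alpha(x)$ for some $\alpha \in L^2([0,L])$ (the $L^2$-integrability of $\alpha$ follows from that of $w$ and the positivity of $p_B - p_A$).

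Finally, to pin down the mean of $\alpha$, I would test against the particular element $v_0(x,p) = 1/(p_B(x)-p_A) \in PL^2$, which has vertical integral $\equiv 1$. Then
\begin{equation*}
    0 = \int_\M w\, v_0\,dx\,dp = \int_0^L \alpha(x)\,dx,
\end{equation*}
so $\alpha \in \mathcal{A}$, completing the proof. The main technical obstacle throughout is the $x$-dependence of $p_B$: it is what forces the localization argument in Step~2 and what makes $v_0$ a nontrivial (rather than merely constant) element of $PL^2$.
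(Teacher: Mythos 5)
Your proof is correct and follows essentially the same route as the paper's: both directions hinge on the constancy of the vertical integral $\int_{p_A}^{p_B} v\,dp$ for $v \in PL^2$, on a distributional argument showing that any element of the orthogonal complement has vanishing $p$-derivative (hence depends on $x$ alone), and on testing against an element of $PL^2$ with nonzero vertical integral to extract the zero-mean condition. The only cosmetic differences are that the paper obtains $\partial_p w = 0$ by observing that $\partial \phi/\partial p \in PL^2$ for every $\phi \in C^\infty_c(\M)$ \textemdash\ which subsumes your localized products $\phi(x)\psi(p)$ with $\int \psi\,dp = 0$, since such a $\psi$ is itself the $p$-derivative of a compactly supported function, and sidesteps your slab-exhaustion step \textemdash\ and that it handles the first inclusion by integrating by parts against the primitive of $\alpha$ rather than invoking the constancy of the vertical integral directly.
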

\begin{proof}
We first show that $(I-P)L^2\supset R.H.S$ of \eqref{e2.14}. Let
$\alpha=\alpha(x) \in L^2([0,L])$ be independent of $p$ and
$\int_0^L\alpha(x)dx=0$. Let $\bar \alpha$ be the primitive function of
$\alpha$ vanishing at $x=0$, that is,
\begin{equation}
\bar\alpha(x) = \int_0^x\alpha(x')dx',
\end{equation}
so that $\alpha =\bar\alpha_x$ and $\bar\alpha\in H^1(0,L)$.
Note also that $\bar\alpha(L)=0$ since $\int_0^L\alpha(x)dx=0$. Then for $v \in PL^2$, we deduce that
\begin{equation*}
\begin{split}
	  (v,\alpha)_{L^2} = (v,\bar\alpha_x)_{L^2} & =\int_0^L \int_{p_A}^{p_B}v\bar\alpha_xdpdx=\int_0^L\bar\alpha_x\left(\int_{p_A}^{p_B}v dp\right)dx
	  \\
	  & = \left. \bar\alpha \left(\int_{p_A}^{p_B}v dp\right) \right|_0^L - \int_0^L \bar\alpha \frac{\de}{\de x}\left(\int_{p_A}^{p_B}v dp\right)dx \\
	  & = 0. 
\end{split}
\end{equation*}
This implies that $\left\{ \alpha, \enspace \alpha=\alpha(x) \in L^2([0,L]), \int_0^L\alpha dx=0\right\}\subset (I-P)L^2$.\\
Conversely, we prove that $(I-P)L^2\subset R.H.S$ of \eqref{e2.14}. Let $\tilde u\in L^2(\M)$ and let $u$ be its orthogonal projection on $PL^2$, then $\tilde u-u=(I-P)\tilde u$ and 
$$(u-\tilde u,v)=0,\enspace \forall v \in PL^2,$$
which in fact characterizes $u$.
For  $\phi \in C^\infty_c(\M)$ we have
$$\frac{\de}{\de x}\left(\int_{p_A}^{p_B}\frac{\de \phi}{\de p}dp\right)= \frac{\de}{\de x} \left(\phi(p_B,x)-\phi(p_A,x)\right)=0,$$
so $\de \phi/\de p \in PL^2$, which gives 
$$(u-\tilde{u},\frac{\de \phi}{\de p})=0,\enspace \forall\phi \in C^\infty_c(\M).$$
This implies that $\frac{\de}{\de p}(u-\tilde{u})=0$
in the sense of distributions on $\M$. Therefore $u-\tilde{u}$ does not depend on $p$, that is, $u-\tilde u=\alpha\in L^2(0,L)$ (see \cite{Sw} and Section 4.4 in \cite{PTZ08} for more details about distributions independent of one variable). It remains to prove that $ \int_0^L \alpha(x) dx=0$, but for $v \in PL^2$
\begin{equation*}
\begin{split}
	  & \int_0^L\int_{p_A}^{p_B} (u-\tilde u)vdpdx=0,\\
	  & \Longrightarrow \int_0^L (u-\tilde u)\int_{p_A}^{p_B}v dp dx =0,\\
	  & \Longrightarrow \int_0^L (u-\tilde u) dx =0.
\end{split}
\end{equation*}
For the last implication we have chosen an arbitrary $v \in PL^2$ such that $\int ^{p_B}_{p_A} v dp$ 
-- which is constant in $x$ and independent of $p$ -- is not zero. Thus, we have 
\begin{equation} \notag
\left\{ \alpha, \enspace \alpha=\alpha(x) \in L^2([0,L]), \int_0^L\alpha dx=0\right\}\supset (I-P)L^2,
\end{equation}
and \eqref{e2.14} is proved.
\end{proof}
Now we assume that $\phi$ and $\omega$ are known and denote the third component of the solution of equation $\eqref{bvp}_1$ by $\tilde u \in L^2(\M)$;
so far $\omega$ is not required yet to satisfy $\eqref{bvp}_2$.
Let $u \in PL^2$ be the orthogonal projection of $\tilde u$ onto $PL^2$ and let $\lambda_x(x)$ be its orthogonal complement so that $\lambda, \lambda_x \in L^2(0,L)$.
Hence, we obtain
\begin{equation}
 \label{eq2.3.2}
u+\lambda_x=\tilde{u}.
\end{equation}
We integrate \eqref{eq2.3.2} in $p$ from $p_A$ to $p_B$ and differentiate in $x$ using \eqref{eq2.2.7c}:
\begin{equation}
 \label{2.3.3}
	  \frac{\de}{\de x}\int_{p_A}^{p_B}\lambda_x(x)dp=\frac{\de}{\de x}\int_{p_A}^{p_B}\tilde{u}dp.
\end{equation}
From \eqref{2.3.3} we deduce the equation for $\lambda_x$:
\begin{equation}
 \label{2.3.4}
	  \frac{\de}{\de x}\left((p_B-p_A)\lambda_x\right)=\frac{\de p_B}{\de x}\lambda_{x}+(p_B-p_A)\lambda_{xx}=\frac{\de}{\de x}\int_{p_A}^{p_B}\tilde{u}dp.
\end{equation}

\subsection{Treatment of $p_B(x)$}
\label{sec2.4}
As explained in Section \ref{sec2.1}, the topography is determined by either $p_B(x)$ or $z_B(x)$. To find the relation between $z_B$ and $p_B$, we use $\eqref{eq1.1.1}_6$ and $\eqref{phibar}$:
\begin{equation}
 \label{eq2.4.1}
	    \frac{\de z}{\de p}g=\frac{\de \bar{\phi}}{\de p}=-\frac{R\bar{T}}{p} = -\frac{R}{p}[T_0 - (1 - \frac{p}{p_0})\Delta T],
\end{equation}
then by integrating in $p$
\begin{equation}
 \label{eq2.4.2}
  zg=-R(T_0-\Delta T)\ln(p)-\frac{R\Delta T}{p_0}p+C.
\end{equation}
At $z=0$, we have $p=p_0$ (virtual pressure on the whole segment) and this gives C:
$$C=R(T_0-\Delta T)\ln(p_0)+R\Delta T.$$
Moreover, from the fact that $z=z_B$ at $p=p_B$, we deduce that
\begin{equation}
 \label{eq2.4.3}
z_B(x)g=-R(T_0-\Delta T)\left(\ln(p_B(x))-\ln(p_0)\right)-\frac{R\Delta T}{p_0}p_B(x)+R\Delta T.
\end{equation}
Since the topography is known, that is the function $z_B=z_B(x)$ is given, and we can then compute $p_B(x)$ from \eqref{eq2.4.3} using the Newton method. However, in our simplified calculations, we choose $p_B(x)$ and deduced the topography from \eqref{eq2.4.3} to avoid the repeated use of the Newton method.

\section{Numerical scheme: the finite volume method}
\label{sec3}
In this article, we use the Godunov's method as {\mk described} in Chapter 23 of \cite{L} combined with a spatial discretization by finite volumes  {\mk which has to be performed with a special care here due to the topography.} In the presence of topography, the spatial domain in $x$ and $p$ is not rectangular.
Such a geometry gives rise to computational difficulties since the classical methods for the directional derivatives are not accurate.
To avoid this problem, we first propose a specific discretization in a given spatial domain in Section \ref{sec3.1}. Then, we introduce the first order finite volume scheme to compute $T$, $q$ and $\tilde u$ in Section \ref{sec3.2}. We then study the discrete projection method, in Section \ref{sec:prom}, to obtain the solution $u$.
Following the computation of the diagnostic variables, we consider the computation of $\omega$, in Section \ref{sec:omega}, and the computation of $\phi$ in Section \ref{sec:phi}, the two prognostic variables.
Finally, in Section \ref{sec3.3} we
present our discretization scheme in time which is the classical Runge-Kutta 4th-order method.
\subsection{Space discretization}
\label{sec3.1}
Let us set the spatial domain $\M=[0,L]\times[p_A,p_B(x)]$ which contains a topography. On the interval $[0,L]$ in the $x$-direction, we define
\begin{equation}
 \begin{cases}
	    x_{i-\frac{1}{2}} = (i-1) \dx ,\quad  1 \leq i \leq N_x +1,\\
	    0= x_{\frac{1}{2}} < x_{\frac{3}{2}} < ... < x_{N_x + \frac{1}{2}} = L,
 \end{cases}
\end{equation}
where $\dx = \frac{L}{N_x}$. For fixed $x=x_{i-\f{1}{2}}$, $1 \leq i \leq N_x+1$, we define
\begin{equation}
 \begin{cases}
	    p_{i-\f{1}{2}, j-\f{1}{2}} = p_A + (j-1) \dep_{i-\f{1}{2}},\\
	    p_A= p_{i-\f{1}{2},\f{1}{2}} < p_{i-\f{1}{2},\f{3}{2}} < ... < p_{i-\f{1}{2},N_p+\f{1}{2}} = p_B(x_{i-\f{1}{2}}),
 \end{cases}
\end{equation}
where $\dep_{i-\f{1}{2}} = \dfrac{p_B(x_{i-\f{1}{2}})-p_A}{N_p}$.
We then discretize the domain $\M$ in $(N_x+2) \times (N_p+2)$ cells $C_{i,j}$ where $0 \leq i \leq N_x+1$ and $0 \leq j \leq N_p+1$. 
For $1 \leq i \leq N_x$ and $1 \leq j \leq N_p$, the cells $C_{i,j}$ are trapezoid; see e.g. Figure \ref{mesh1}.
For $i=0, N_x+1$ or $j=0,N_p+1$, the cells $C_{i,j}$ refer to the flat control volumes on the boundary of $\M$. 
\noindent
For the inside cells, we set for $1 \leq i \leq N_x$ and $1 \leq j \leq N_p$
\begin{equation}
	  C_{i,j} := \text{trapezoid connecting  } \mbf{x}_{i-\f{1}{2},j-\f{1}{2}}, \mbf{x}_{i-\f{1}{2},j+\f{1}{2}}, \mbf{x}_{i+\f{1}{2},j+\f{1}{2}}, \text{  and  } \mbf{x}_{i+\f{1}{2},j-\f{1}{2}}
\end{equation}
where $\mbf{x}_{i-\f{1}{2},j-\f{1}{2}} = (x_{i-\f{1}{2}},p_{i-\f{1}{2},j-\f{1}{2}})$.

We now consider the barycenter of the inside cells (see Figure \ref{f_bary}).
Using the diagonals, we split the quadrilateral cell $C_{i,j}$, for $1 \leq i \leq N_x$ and $1 \leq j \leq N_p$, into four different triangles and find the barycenter of each of them:
\begin{equation}
\begin{split}
	  & \bar x_1 = \frac{x_{i-\frac{1}{2}} + x_{i+\frac{1}{2}} + x_{i-\frac{1}{2}}}{3},
	      \quad \bar p_1 = \frac{p_{i-\frac{1}{2}, j-\frac{1}{2}} + p_{i+\frac{1}{2}, j-\frac{1}{2}} + p_{i-\frac{1}{2}, j+\frac{1}{2}}}{3} , \\
	  & \bar x_2 = \frac{x_{i-\frac{1}{2}} + x_{i+\frac{1}{2}} + x_{i+\frac{1}{2}}}{3},
	      \quad \bar p_2 = \frac{p_{i-\frac{1}{2}, j-\frac{1}{2}} + p_{i+\frac{1}{2}, j-\frac{1}{2}} + p_{i+\frac{1}{2}, j+\frac{1}{2}}}{3} , \\
	  & \bar x_3 = \frac{x_{i+\frac{1}{2}} + x_{i+\frac{1}{2}} + x_{i-\frac{1}{2}}}{3},
	      \quad \bar p_3 = \frac{p_{i+\frac{1}{2}, j-\frac{1}{2}} + p_{i+\frac{1}{2}, j+\frac{1}{2}} + p_{i-\frac{1}{2}, j+\frac{1}{2}}}{3} , \\
	  & \bar x_4 = \frac{x_{i-\frac{1}{2}} + x_{i+\frac{1}{2}} + x_{i -\frac{1}{2}}}{3},
	      \quad \bar p_4 = \frac{p_{i-\frac{1}{2}, j-\frac{1}{2}} + p_{i+\frac{1}{2}, j+\frac{1}{2}} + p_{i -\frac{1}{2}, j+\frac{1}{2}}}{3}.
\end{split}
\end{equation}
\begin{figure}[H]
	  \centering
	  \includegraphics[scale=0.5]{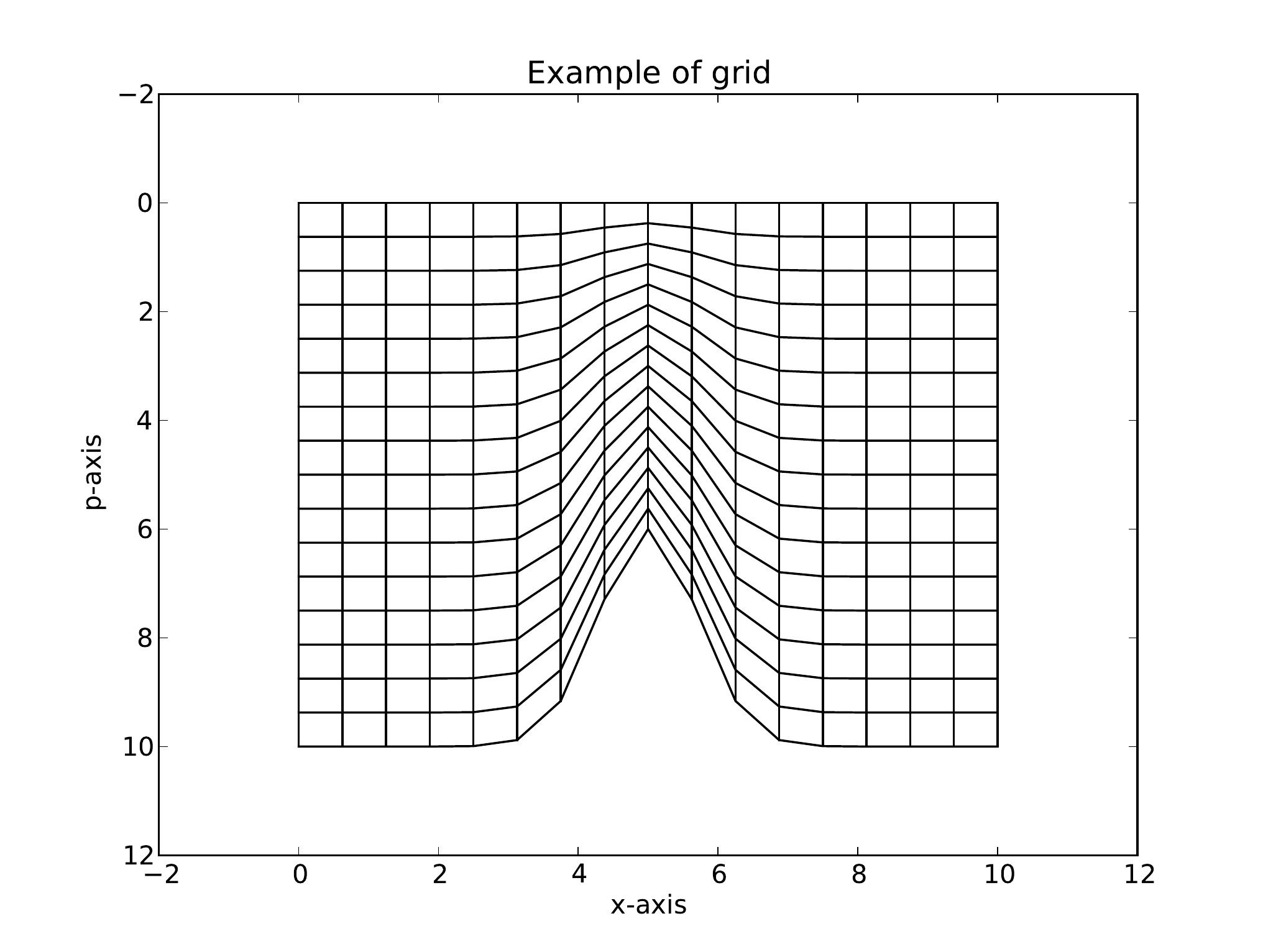}
	  \caption{Example of the spatial discretization}
\label{mesh1}
\end{figure}
The barycenter of the quadrilateral cell $C_{i,j}$ is the point of intersection between two lines which pass through $\bar x_1$ and $\bar x_3$, and $\bar x_2$ and $\bar x_4$, respectively.
Thus, the barycenter $(x_m,y_m)$ is
\begin{equation}
\begin{split}
	  & x_m = \frac{\bar A_1 x_1 - \bar A_2 \bar x_2 + \bar p_2 - \bar p_1 }{\bar A_1 - \bar A_2}, \\
	  & p_m = \bar A_1(x_m - \bar x_1) + \bar p_1,
\end{split}
\end{equation}
where 
$\bar A_1 = \frac{\bar p_1 - \bar p_3}{\bar x_1 - \bar x_3}$ and $\bar A_1 = \frac{\bar p_2 - \bar p_4}{\bar x_2 - \bar x_4}$.
\begin{figure}[h!]
	  \centering
	  \includegraphics[scale=1]{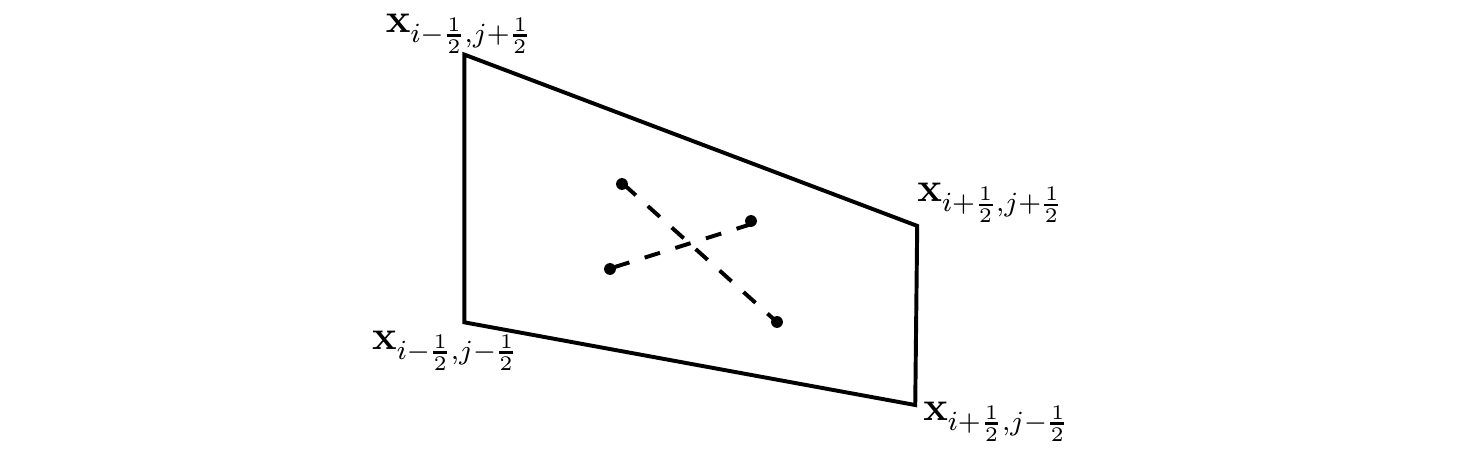}
	  \caption{Computation of the barycenter of a quadrilateral cell.}
	  \label{f_bary}
\end{figure}
We also define the centers of the East and West edges to compute the
fluxes in Section \ref{sec3.2} below.
For a given cell $C_{ij}$, let us call the centers of the West edge as 
  $(x_{i-\frac{1}{2}}, p_{i-\frac{1}{2},j})$
and the centers of the East edge as
  $(x_{i+\frac{1}{2}}, p_{i+\frac{1}{2},j})$
  , respectively.
Then, the centers of the edges are defined below:                                
\begin{equation}
\begin{split}
	  & (x_{i-\frac{1}{2}}, p_{i-\frac{1}{2},j}) 
		    =  \Big(x_{i-\frac{1}{2}},\frac{p_{i-\frac{1}{2},
		    	j-\frac{1}{2}} + p_{i-\frac{1}{2},j+\frac{1}{2}}}{2} \Big),\\
	  & (x_{i+\frac{1}{2}}, p_{i+\frac{1}{2},j}) 
		    = \Big ( x_{i+\frac{1}{2}},\frac{p_{i+\frac{1}{2},
		    	j-\frac{1}{2}} + p_{i+\frac{1}{2},j+\frac{1}{2}}}{2} \Big ).	
\end{split}
\end{equation}	
We can find the North/South center of the edges in the same way. Note that the vertical edges of the cells are parallel to the $p$-axis so that the barycenter of the trapezoidal cells are well-aligned in the $p$-direction; See Figures \ref{mesh1} and \ref{fig.4.2.1}.\\
We introduce the flat control volumes along the boundary of $\M$ to impose the boundary conditions:
\begin{equation}
        \left\{
                \begin{array}{rl}
                        C_{0, j}
& 
                               = \text{
                                                segment joining
                                                $\mathbf{x}_{  \frac{1}{2}, \, j-\frac{1}{2}}$ and
$\mathbf{x}_{\frac{1}{2}, \, j+\frac{1}{2}}$},
                        \quad
                        1 \leq j \leq N_p.
                        \\
                        C_{N_x+1, j}
& 
                               = \text{
                                                segment joining
                                                $\mathbf{x}_{ N_x + \frac{1}{2}, \, j-\frac{1}{2}}$
and $\mathbf{x}_{ N_x + \frac{1}{2}, \, j+\frac{1}{2}}$,
                                }
                        \quad
                        1 \leq j \leq N_p.
			\\
                        C_{i, 0}
& 
                               = \text{
                                                segment joining $\mbf{x}_{i- \frac{1}{2},
\frac{1}{2}}$ and $\mbf{x}_{i+\frac{1}{2}, \frac{1}{2}}$,
                                                }
                        \quad
                        1 \leq i \leq N_x,\\
                    
                        C_{i, N_p+1}
& 
                               = \text{
                                                segment joining $\mbf{x}_{i-\frac{1}{2},
N_p+\frac{1}{2}}$ and $\mbf{x}_{i+\frac{1}{2}, N_p+\frac{1}{2}}$,
                                                }
                        \quad
                        1 \leq i \leq N_x.
                \end{array}
        \right.
\label{e:K_i,j_fic}
\end{equation}
We set the centers of the flat control volumes in \eqref{e:K_i,j_fic} as follows:
\begin{equation}\label{e:barycenter_flat}
        \left\{
                \begin{array}{l}
                    
                        \mbf{x}_{i, 0}
                                = \dfrac{1}{2}
                                    \Big(
                                        \mbf{x}_{i- \frac{1}{2}, \frac{1}{2}}
                                        +
                                        \mbf{x}_{i+\frac{1}{2}, \frac{1}{2}}
                                    \Big),
                        \quad
                        \mbf{x}_{i, N_p+1}
                                    = \dfrac{1}{2}
                                        \Big(
                                                \mbf{x}_{i-\frac{1}{2}, N_p+\frac{1}{2}}
                                                +
                                                \mbf{x}_{i+\frac{1}{2}, N_p+\frac{1}{2}}
                                        \Big),
                        \quad
                        1 \leq i \leq N_x,\\
                         \mbf{x}_{0, j}
                                     =\dfrac{1}{2}
                                         \Big(
                                                     \mbf{x}_{\frac{1}{2}, j-\frac{1}{2}}
                                                     +
                                                    \mbf{x}_{\frac{1}{2}, j+\frac{1}{2}}
                                         \Big),
                         \quad
                         \mbf{x}_{N_x+1, j}
                                    =\dfrac{1}{2}
                                         \Big(
                                                 \mbf{x}_{N_x+\frac{1}{2}, j-\frac{1}{2}}
                                                 +
                                                 \mbf{x}_{N_x+\frac{1}{2}, j+\frac{1}{2}}
                                        \Big),
                         \quad
                         1 \leq j \leq N_p.
                 \end{array}
         \right.
\end{equation}
We also define the segments $\Gamma_{i,j+\frac{1}{2}}$ and $\Gamma_{i+\frac{1}{2},j}$
\begin{equation}
 \begin{split}
  & \Gamma_{i,j+\frac{1}{2}} \mbox{ is the segment connecting } \mbf{x}_{i-\frac{1}{2},j+\frac{1}{2}} \mbox{ and }  \mbf{x}_{i+\frac{1}{2},j+\frac{1}{2}},\\
    & \Gamma_{i+\frac{1}{2},j} \mbox{ is the segment connecting } \mbf{x}_{i+\frac{1}{2},j-\frac{1}{2}} \mbox{ and }  \mbf{x}_{i+\frac{1}{2},j+\frac{1}{2}}.\\
 \end{split}
\end{equation}
We now introduce the finite volume space $V_h$:
\begin{equation}\label{e:FV_space}
 V_{h} := \left\{ \begin{array}{l}
             \text{space of step functions $u_{h}$ on } \overline{\M} \text{ such that }\\
             {u_{h}}|_{C_{i,j}} = u_{i,j} , \text{ } 0 \leq i \leq N_x+1, \text{ } 0 \leq j
 \leq N_p+1.
             \end{array}\right\}.
 \end{equation}
We then write
\begin{equation}\label{e_FV:5}
            u_{h}
                    = \sum_{i=0}^{N_x+1}\sum_{j=0}^{N_p+1} u_{i,j}\chi_{C_{i,j}},
\end{equation}
where $\chi_{C_{i,j}}$ is the characteristic function on $C_{i,j}$.
\bigskip
\bigskip
For the computation of $\omega$ in Section \ref{sec:omega}, and $\phi$ in Section \ref{sec:phi}, we construct the quadrilateral cells $C_{i,j+\frac{1}{2}}$ to employ the finite volume derivatives as in \cite{BGHL}, \cite{GT}, and \cite{GT13}:
\begin{equation}\label{quad1}
    C_{i,j+\frac{1}{2}}
                :=
                       \text{quadrilateral connecting
                                $\mathbf{x}_{i - \frac{1}{2}, \, j+\frac{1}{2}}$,
                                $\mathbf{x}_{i		    , \, j	      }$,
                                $\mathbf{x}_{i+\frac{1}{2}, \, j+\frac{1}{2}}$,
                                and
                                $\mathbf{x} _{i		    , \, j+1	      }$,
                                }
\end{equation}
for $1 \leq i \leq N_x$, $0 \leq j \leq N_p$; see Figure \ref{fig.4.2.1}.
\begin{figure}[h]
	  \centering
	  \includegraphics[scale=0.8]{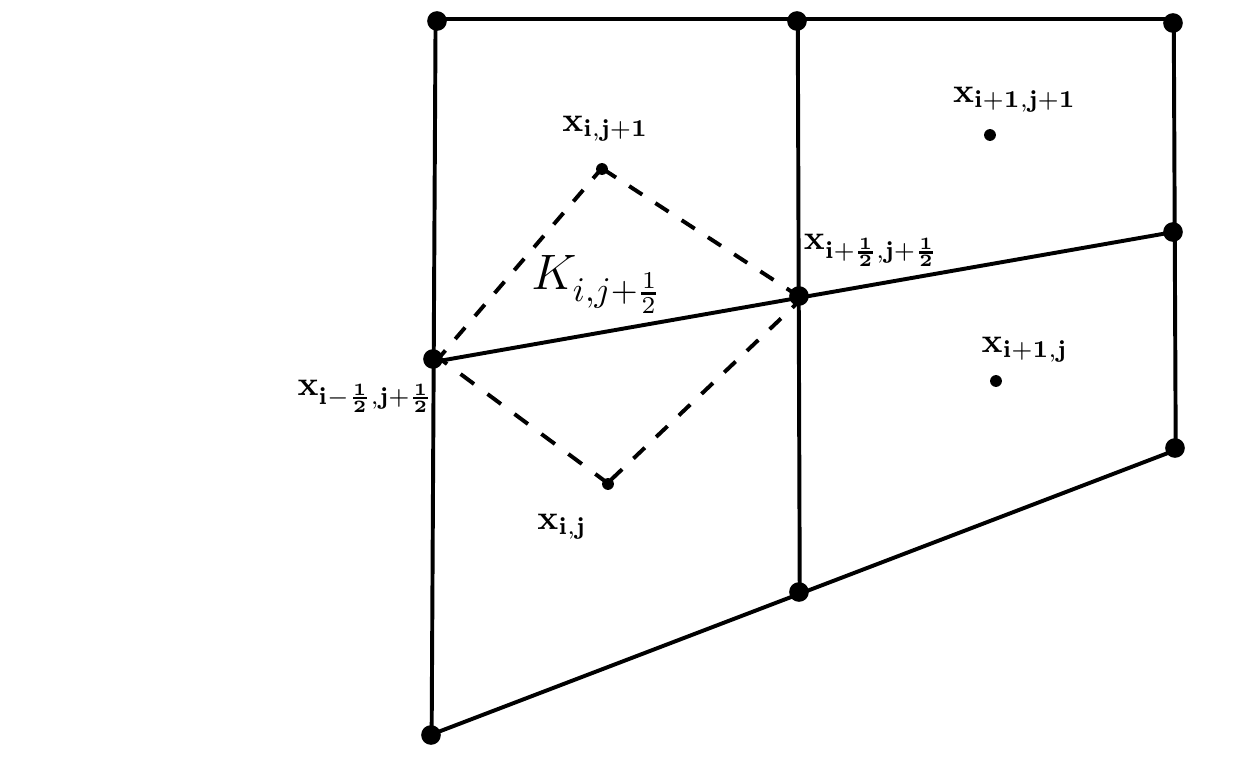}
	  \caption{Computation of the barycenter of a cell.}
	  \label{fig.4.2.1}
\end{figure}
We aim to compute the gradients of $u$, $\omega$, and $\phi$ in $C_{i,j+\frac{1}{2}}$. Let us start by defining the coefficients $a^1_{i+\frac{1}{2},j+\frac{1}{2}}$, $a^2_{i+\frac{1}{2},j+\frac{1}{2}}$, $a^3_{i+\frac{1}{2},j+\frac{1}{2}}$, and $a^4_{i+\frac{1}{2},j+\frac{1}{2}}$ satisfying the equations below, for $0\leq i \leq N_x$ and $1\leq j\leq N_p$:\\
\begin{equation}
  \begin{cases}
      {\bf x}_{i+\frac{1}{2},j+\frac{1}{2}} = a^1_{i+\frac{1}{2},j+\frac{1}{2}} {\bf x}_{i,j} + a^2_{i+\frac{1}{2},j+\frac{1}{2}} {\bf x}_{i+1,j} + a^3_{i+\frac{1}{2},j+\frac{1}{2}} {\bf x}_{i,j+1} + a^4_{i+\frac{1}{2},j+\frac{1}{2}} {\bf x}_{i+1,j+1},\\
      1 = a^1_{i+\frac{1}{2},j+\frac{1}{2}}+ a^2_{i+\frac{1}{2},j+\frac{1}{2}} +a^3_{i+\frac{1}{2},j+\frac{1}{2}}+ a^4_{i+\frac{1}{2},j+\frac{1}{2}}.
  \end{cases}\label{lin1}
\end{equation}
We then obtain $a^l_{i+\frac{1}{2},j+\frac{1}{2}}$ for $l=1,2,3,4$ by fixing one of the four variables; see \cite{BGHL} for more details. We then compute $u_{i+\frac{1}{2},j+\frac{1}{2}}$:
\begin{equation}
 u_{i+\frac{1}{2},j+\frac{1}{2}} = a^1_{i+\frac{1}{2},j+\frac{1}{2}}u_{i,j} + a^2_{i+\frac{1}{2},j+\frac{1}{2}} u_{i+1,j} + a^3_{i+\frac{1}{2},j+\frac{1}{2}} u_{i,j+1} + a^4_{i+\frac{1}{2},j+\frac{1}{2}} u_{i+1,j+1}.
\end{equation}

Now, we can define the non singular matrices $M_{i,j+\frac{1}{2}}$ whose row vectors represent the diagonal of $C_{i+j+\frac{1}{2}}$:
\[
 M_{i,j+\frac{1}{2}} :=
 \begin{pmatrix}
  x_{i+\frac{1}{2},j+\frac{1}{2}} - x_{i-\frac{1}{2},j+\frac{1}{2}} & p_{i+\frac{1}{2},j+\frac{1}{2}} -p_{i-\frac{1}{2},j+\frac{1}{2}} \\
  x_{i,j+1}- x_{i,j} & p_{i,j+1} -p_{i,j}
 \end{pmatrix}.
\]
Then, we obtain the gradient $\nabla_h u_h:= (\nabla_h^x u_h,\nabla_h^p u_h)$ (or $\nabla_h \omega_h$ or $\nabla_h \phi_h$)  as follows:
\begin{equation}\label{diamond}
 \left.\nabla_h u_h \right|_{C_{i,j+\frac{1}{2}}}:=
  M^{-1}_{i,j+\frac{1}{2}} \cdot
 \begin{pmatrix}
  u_{i+\frac{1}{2},j+\frac{1}{2}} - u_{i-\frac{1}{2},j+\frac{1}{2}} \\
  u_{i,j+1} - u_{i,j}
 \end{pmatrix}\quad  \mbox{ for  } \quad 1\leq j \leq N_p-1.
\end{equation}

\subsection{Finite volume scheme: Godunov's scheme}
\label{sec3.2}
In our simulations, we derive our finite volume method from an upwind finite volume method, see e.g. 
\cite{L}, to implement our schemes. 
From \eqref{e:notation} -- \eqref{bc} we define the finite volume space for $\mbf u$
\begin{equation}\label{e:FV_space1}
	  \mc V_{h} := 
	      \left\{ \begin{array}{l}
			 \mbf u_{h}=(T_h,q_h,u_h) \in (V_h)^3 
				  \text{ such that }\\
			 \mbf u_{N_x+1,j}=\mbf u_{N_x,j}, \enspace 1 \leq j \leq N_p,\\
			  \mbf u_{0,j} 
				  = \mbf G (x_{0},p_{0,j}), \enspace  1 \leq j 
\leq N_p
		      \end{array}\right\}.
 \end{equation}
The finite volume space for $\omega$ is
 \begin{equation}\label{e:FV_space2}
		    \mc W_{h} := 
			  \left\{ \begin{array}{l}
				      \omega_{h}\in V_h \text{ such that }\\
				      \omega_{N_x+1,j}= \omega_{N_x,j},\enspace 1\leq j \leq N_p,\\
				      \omega_{0,j}=\omega_{1,j},\enspace 1\leq j \leq N_p,\\
				      \omega_{i,0} = 0, \enspace 1 \leq i \leq N_x\\
				   \end{array}\right\}.
 \end{equation}
For our last unknown $\phi_x$, the finite volume space is
 \begin{equation}\label{e:FV_space3}
		  \mc K_{h} := \left\{ \begin{array}{l}
					      \mbf (\phi_{x})_h \in V_h \text{ such that }\\
					      (\phi_{x})_{i,0} = 0,\enspace 1 \leq i \leq N_x\\
					\end{array}\right\}.
 \end{equation}
By integrating \eqref{bvp} on each cell to project \eqref{bvp} onto the finite volume spaces, we obtain
\begin{equation}
	  \frac{d \mbf u_{i,j}}{dt} 
	  + \frac{1}{|C_{i,j}|} \int_{C_{i,j}} \nabla_{x,p}(u\mbf u,\omega   \mbf u)dxdp
	  +(\mbf \Phi_x)_{i,j}
	  =\mbf S_{i,j},\enspace 1\leq i\leq N_x,\; 1\leq j \leq N_p,
	  \label{e:bvp_h}
\end{equation}
where
\begin{equation}
 \begin{split}
	  & \mbf u_{i,j}(t)
		      = \frac{1}{|C_{i,j}|}\int_{C_{i,j}} \mbf u,\\
	  & \mbf S_{i,j}(t)
		      = \frac{1}{|C_{i,j}|}\int_{C_{i,j}} \mbf S,\\
	  & (\mbf \Phi_x)_{i,j}(t)
		      = \frac{1}{|C_{i,j}|}\int_{C_{i,j}} \mbf \Phi_x,\\
	  & \omega_{i,j}(t)
		      = \frac{1}{|C_{i,j}|}\int_{C_{i,j}}\omega,
 \end{split}
\end{equation} 
and $\mbf u$, $\mbf S$, and $\mbf \Phi_x$ are as in \eqref{e:notation}. In this subsection, we focus on the fluxes and find $\mbf{u} = (T,q,u)$ using upwind schemes. We then look for $\omega$ and $\phi_x$ separately in Sections \ref{sec:omega} and \ref{sec:phi}.\\

Using the divergence theorem, we obtain that
\begin{equation}\begin{split}
	  &\frac{1}{|C_{i,j}|}\int_{C_{i,j}}div \begin{pmatrix} uT\\ \omega T\end{pmatrix} 
		      =\frac{1}{|C_{i,j}|} \int_{\de C_{i,j}}   \mbf{{n}}\cdot\begin{pmatrix} u\\ \omega \end{pmatrix}T,\\
	  &\frac{1}{|C_{i,j}|}\int_{C_{i,j}}div \begin{pmatrix} uq\\ \omega q\end{pmatrix} 
		      =\frac{1}{|C_{i,j}|} \int_{\de C_{i,j}}   \mbf{{n}}\cdot\begin{pmatrix} u\\ \omega \end{pmatrix}q,\\
	  &\frac{1}{|C_{i,j}|}\int_{C_{i,j}}div \begin{pmatrix} u^2\\ \omega u\end{pmatrix} 
		      =\frac{1}{|C_{i,j}|} \int_{\de C_{i,j}}   \mbf{{n}}\cdot\begin{pmatrix} u\\ \omega \end{pmatrix}u,
\end{split}\end{equation}
where $\mbf{ n }$ is the outer normal vector of the cell $C_{i,j}$, $1\leq i \leq N_x$ and $1\leq j \leq N_p$. We then rewrite the second term of \eqref{e:bvp_h} as
\begin{equation}
	  \frac{1}{|C_{i,j}|} \int \int_{C_{i,j}}div 
		      \begin{pmatrix} 
				    u\mathbf{u}\\ \omega \mathbf{u}
		      \end{pmatrix} 
		      \simeq 
		      \frac{1}{|C_{i,j}|}
		      \left(\mathbf{G}_{i,j+\f{1}{2}}
		      -\mathbf{G}_{i,j-\f{1}{2}}+\mathbf{F}_{i+\f{1}{2},j}
		      - \mathbf{F}_{i-\f{1}{2},j}\right),
	  \label{flux}
\end{equation}
where the vertical fluxes $\mathbf{G}_{i,j+\f{1}{2}}$ and $\mathbf{G}_{i,j-\f{1}{2}}$ are respectively the up and down fluxes, and the horizontal fluxes $\mathbf{F}_{i+\f{1}{2},j}$ and $\mathbf{F}_{i-\f{1}{2},j}$ are
respectively the West and East fluxes.

We aim to find $\mbf u_h \in \mc V_h$ with $\omega_h \in \mc W_h$, $\mbf (\Phi_x)_h=(0,0,(\phi_x)_h)$, $(\phi_x)_h\in\mc K_h$, and $\mbf S_h\in (V_h)^3$,
\begin{equation}
\begin{split}
 & \frac{d\mbf u_{i,j}}{dt}=(\mbf R_h(\mbf u_h,\omega_h,t))_{i,j},\enspace 1\leq i \leq Nx,\;1\leq j \leq N_p,\\
&(\mbf R_h(\mbf u_h,\omega_h,t))_{i,j}=-\frac{1}{|C_{i,j}|}
		      \left(\mathbf{G}_{i,j+\f{1}{2}}-\mathbf{G}_{i,j-\f{1}{2}}+\mathbf{F}_{i+\f{1}{2},j}-\mathbf{F}_{i-\f{1}{2},j}\right)\\
&\hspace{1.5in}-(\mbf \Phi_x)_{i,j}+\mbf S_{i,j}.
\label{bvp2}
\end{split}
\end{equation}
Before describing the fluxes, we define the normals vectors. We keep the same direction for all the normal vectors; West to East and Bottom to Top.
Let $\vec{n}_{i,j+\frac{1}{2}}$ be the normal vector for the upper and lower boundaries such that
\begin{equation}
 \vec{n}_{i,j+\frac{1}{2}}=(n^x_{i,j+\frac{1}{2}},n^p_{i,j+\frac{1}{2}}).
\end{equation}
Let $\vec{n}_{i+\frac{1}{2},j}$ be the vector for the East and West boundaries such that
\begin{equation}
 \vec{n}_{i+\frac{1}{2},j}=(1,0).
\end{equation}

The vertical fluxes are defined as follows: for $1\leq i \leq N_x$ and $0\leq j \leq N_p$,
\begin{equation} \label{e:NS_Flux}
\begin{split}
& \mathbf{G}_{i,j+\frac{1}{2}} =|\Gamma_{i,j+\frac{1}{2}}| \vec{n}_{i,j+\frac{1}{2}} . \begin{pmatrix} u_{i,j+\frac{1}{2}} \\\omega_{i,j+\frac{1}{2}} \end{pmatrix} \mathbf{\check{u}}_{i,j+\frac{1}{2}},
\end{split}
\end{equation}
where
\begin{equation}
 \mathbf{\check{u}}_{i,j+\frac{1}{2}} = \begin{cases} \mathbf{u}_{i,j}, \quad \text{if } \quad \vec{n}_{i,j+\frac{1}{2}} . \begin{pmatrix} u_{i,j+\frac{1}{2}} \\ \omega_{i,j+\frac{1}{2}} \end{pmatrix} \geq 0, \\ \mathbf{u}_{i,j+1}, \quad \text{if } \quad \vec{n}_{i,j+\frac{1}{2}} . \begin{pmatrix} u_{i,j+\frac{1}{2}} \\ \omega_{i,j+\frac{1}{2}} \end{pmatrix} < 0. \end{cases}
\end{equation}
We note that the barycenter of the trapezoidal cells are well-aligned in $p$-direction, then we reconstruct $u_{i,j+\frac{1}{2}}$, $u_{i,j-\frac{1}{2}}$, $\omega_{i,j+\frac{1}{2}}$, $\omega_{i,j+\frac{1}{2}}$ using the interpolation method. For instance, we approximate $u_{i,j-\frac{1}{2}}$ and $\omega_{i,j-\frac{1}{2}}$, for $1\leq i \leq N_x$, $1\leq j \leq N_p+1$, by
\begin{equation}\begin{split}
& u_{i,j-\frac{1}{2}} = \frac{u_{i,j} + u_{i,j-1}}{2},\\
&\omega_{i,j-\frac{1}{2}} = \frac{\omega_{i,j} + \omega_{i,j-1}}{2};
\label{interpol}
\end{split}\end{equation}
see e.g. Figure \ref{fig.4.1.1}.
\begin{figure}[H]
\centering
\includegraphics[bb = 0 0 200 200]{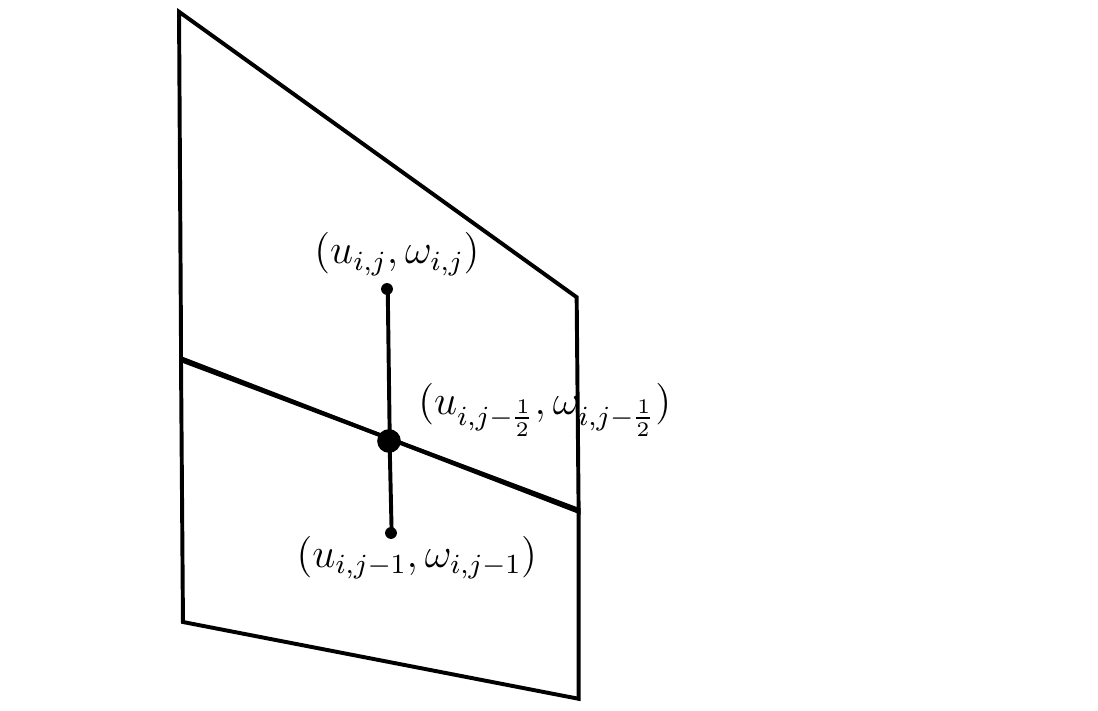}
\caption{Reconstruction of $(u_{i,j-\frac{1}{2}},\omega_{i,j-\frac{1}{2}})$ on the edges where $(u_{i,j-1},\omega_{i,j-1})$ and $(u_{i,j},\omega_{i,j})$ are at the center of each trapezoid cell.}
\label{fig.4.1.1}
\end{figure}
For the horizontal fluxes $\mathbf{F}_{i+\f{1}{2},j}$ and $\mathbf{F}_{i-\f{1}{2},j}$ the normal vectors are $(1,0)$ due to the proposed spatial discretization. (see Figure \ref{mesh1}). Hence, for $0\leq i \leq N_x$ and $1\leq j\leq N_p$, the horizontal fluxes are
\begin{equation}
\begin{split}
	  & \mathbf{F}_{i+\frac{1}{2},j} = |\Gamma_{i+\frac{1}{2},j}| u_{i+\frac{1}{2},j} \mathbf{\check{u}}_{i+\frac{1}{2},j},
\end{split}
\end{equation}
where
\begin{equation}
	  \mathbf{\check{u}}_{i+\frac{1}{2},j} 
		  = 
			\begin{cases} 
			\mathbf{u}_{i,j}, \quad \text{if }  u_{i+\frac{1}{2},j} \geq 0, \\ \mathbf{u}_{i,j+1}, \quad \text{if } u_{i+\frac{1}{2},j} < 0. 
			\end{cases}
\end{equation}
Figure \ref{proj2} shows how we interpolate $u_{i,j}$ and $u_{i+1,j}$ to obtain $u_{i+1/2,j}$.
The expression of $u_{i+\frac{1}{2},j}$, for $0\leq i \leq N_x$, $1\leq j \leq N_p$ reads
\begin{equation}
	  u_{i+\frac{1}{2},j} = ru_{i,j+1} + (1-r)u_{i,j},
\end{equation}
where $r = \frac{p_{i+\frac{1}{2},j} - p_{i,j}}{p_{i+1,j} - p_{i,j}}$.\\
\begin{figure}[h]
\centering
 \includegraphics[scale=0.8]{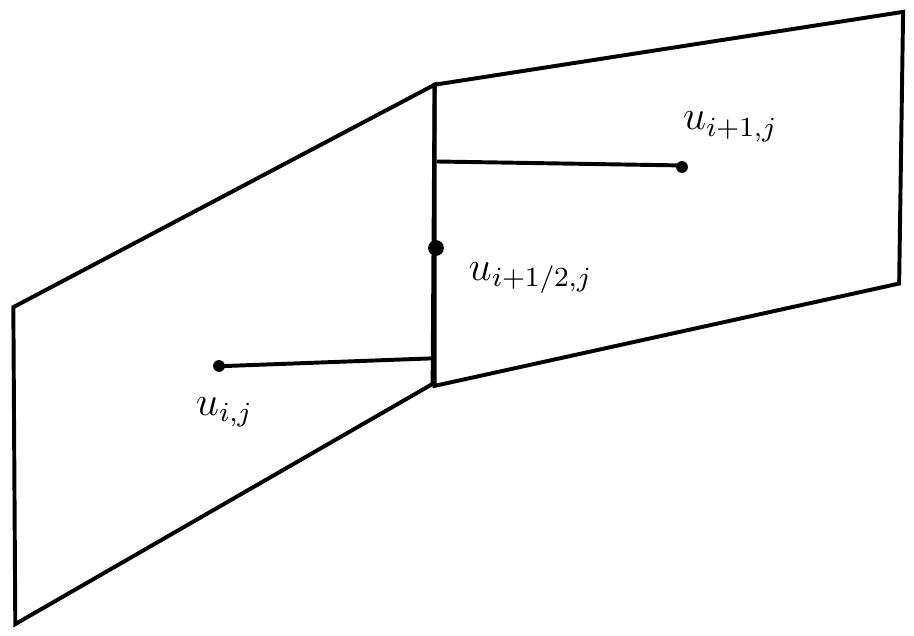}
\caption{Reconstruction of $u_{i+\frac{1}{2},j}$ on the edges where $u_{i,j}$ and $u_{i+1,j}$ are at the center of each quadrilateral cell.}
\label{proj2}
\end{figure}
\subsection{Computation of the projection methods} \label{sec:prom}
In general, the initial condition of $u$ does not follow the compatibility condition \eqref{eq2.2.7}.
Hence, the projection method in Section \ref{ss2.3} plays an important role in our problem.\\
We first set $\alpha_h \simeq \lambda_x$ such that
\begin{equation}
	  \alpha_i=\lambda_x(x_i),\; 1\leq 1 \leq N_x,
\end{equation}
where $x_i$ are the x-coordinates of the barycenter of the cells.

We adopt a forward difference scheme for the derivative in $x$, then \eqref{2.3.4} becomes for $1\leq i \leq N_x-1$
\begin{equation}
	  a_i\alpha_i+b_i\frac{\alpha_{i+1}-\alpha_{i}}{\dx}=c_i
\label{e:lambh}
\end{equation}
where 
\begin{equation}
\begin{split}
	  &a_i = \dfrac{\partial p_B}{\partial x}(x_i),\\
	  &b_i = p_B(x_i) - p_A, \\
	  &c_i = \dfrac{\partial}{\partial x} \int^{p_B}_{p_A} \tilde u dp= 
	  \frac{1}{\dx}\left(\int^{p_B(x_{i+1})}_{p_A} \tilde u
	  dp-\int^{p_B(x_{i})}_{p_A} \tilde u dp\right),
\end{split}
\end{equation}
and $\tilde u$ is the solution $u$ in Section \ref{sec3.2}.
We utilize the mean zero condition in \eqref{e2.14} to impose a boundary condition for $\alpha_{N_x}$:
\begin{equation}
	    \sum^{N_x}_{i=1} \alpha_i = 0 \Longrightarrow 
	    \alpha_{N_x}=-\sum^{N_x-1}_{i=1} \alpha_i.		
\label{e:bd_lamh}	    
\end{equation}
From equations \eqref{e:lambh} and \eqref{e:bd_lamh} we obtain the value of $\alpha_h$ using an LU 
decomposition.

\begin{rem} We consider the Euler method in time, as an example,
and write our projection method as follows:

We define the intermediate steps
 \begin{equation} \label{eq:proj1}
	  \frac{\tilde{u}^{n+1} - u^{n}}{\Delta t}+ div
	  \begin{pmatrix} 
		    (u^n)^2\\ \omega^n u^n
	  \end{pmatrix}
	  +\phi_x^{'n}=0,
\end{equation}
where $\phi'^n_x=(\phi_x)_h$ as in \eqref{e:phi_dis} below, then we find the 
solution $\tilde u^{n+1}$. 
We note that $\tilde u^{n+1}$ does not necessarily satisfy the 
compatibility condition.
We therefore apply the projection method to obtain $u^{n+1}$;
\begin{equation} \label{eq:proj2}
	  \frac{u^{n+1}-\tilde{u}^{n+1}}{\Delta t}+\lambda^{n+1}_x=0.
\end{equation}
Then we easily obtain $\eqref{bvp}_3$ by adding \eqref{eq:proj1} to 
\eqref{eq:proj2}.
	  $$
	  \frac{u^{n+1} - u^{n}}{\Delta t}+ div
	  \begin{pmatrix} 
		    (u^n)^2\\ \omega^n u^n
	  \end{pmatrix}
	  +\phi_x^{'n}+\lambda_x^{n+1}=0,
	  $$
where $\phi_x^{n}=\lambda_x^{n+1}+\phi_x^{'n}$.
\end{rem}

\subsection{Computation of $\omega$} \label{sec:omega}
We look for $\omega$ considering the incompressibility in $\eqref{bvp}_2$ and the given value $u$. We first write a discretized form of $\eqref{bvp}_2$ such that
\begin{equation}
\label{domega}
					\nabla^p_h \omega = -\nabla^x_h u,
\end{equation}
where $\nabla^x_h$ and $\nabla^p_h$ denote the discrete directional derivatives in $x$ and $p$, respectively. Thanks to the proposed spatial discretization, we choose the standard finite difference methods (FDM) for $\nabla^p_h$.
However, for $\nabla^x_h$, it is not accurate to use standard FDM. 
Instead, we utilize finite volume derivatives on $C_{i,j+\f{1}{2}}$ which is defined {in \eqref{diamond}}.
Then, we rewrite \eqref{domega}
\begin{equation}
\omega_{i, j+1} - \omega_{i, j}
        =         -
            (p_{i, j+1} - p_{i, j})
             \nabla_h^x u_h \big|_{K_{i, j+ \frac{1}{2}}},
\text{ }
1\leq i \leq N_x,
\text{ }
1\leq j \leq N_p-1,
\label{domega2}
\end{equation}
with $\omega_{i, 0} = 0$ because $\omega_h \in \mathcal{W}_h$.
We rewrite \eqref{domega2} in a matrix form
\begin{equation}
 \mathcal{A}_h\omega_h = \mathcal{F}_h(u_h),
\label{domega3}
\end{equation}
where $F_h(u_h)$ is the right-hand side and
$\mathcal{A}_h\omega_h$ is the left-hand side of \eqref{domega2}. Since we have
$\omega_{i, 0} = 0$, for $1\leq i \leq N_x$, equation \eqref{domega3} has a
unique solution $\omega \in \mathcal{W}_h$ for a given $u_h \in
\mathcal{V}_h$.

\subsection{Computation of $\phi_x$} \label{sec:phi}
We recall \eqref{e:phi_3} to compute $\eqref{bvp}_3$.
We then project $\phi_x$ onto the space $\mc K_{h}$ in \eqref{e:FV_space3}, and write
\begin{equation}
			(\phi_x)_h = \sum^{N_p+1}_{j=0} \sum^{N_x + 1}_{i=0} (\phi_x)_{i,j} \chi_{C_{i,j}} \in \mc K_{h},
\end{equation}
where $(\phi_x)_{i,j}$ is a step function on $C_{i,j}$ such that $(\phi_x)_{i,j}$ = $(\phi_x)_h  \Big |_{C_{i,j}}$.
We utilize the finite volume derivative on $C_{i,j+\frac{1}{2}}$ to compute $T_x$ as in Section \ref{sec:omega}. Then, \eqref{e:phi_3} becomes
\begin{equation}
			(\phi_x)_{i,j+1} 
			= \sum^j_{j_1=0} (p_{i,j_1 + \frac{1}{2}} - p_{i,j_1 - \frac{1}{2}})
				\frac{-R \nabla^x_h T_h |_{C_{i,j+\frac{1}{2}}}}{p_{i,j}},
\label{e:phi_dis}
\end{equation}
where $1 \leq i \leq N_x$ and $1 \leq j \leq N_p-1$. Considering the boundary condition in \eqref{e:FV_space3}, we complete the computation in \eqref{e:phi_dis}.

\subsection{Time discretization}
\label{sec3.3}
For the time discretization, we use the classical Runge-Kutta 4th-order (RK4) method. Let $t_f>0$ be fixed, denote the time step by $\De t=t_f/N_{t}$ where $N_{t}$ is an integer representing the total number of time iterations; for $n=0,..,N_{t}$ we define $T^{n}$, $q^{n}$, $u^{n}$, $\omega^{n}$ as the approximate values of $T$, $q$, $u$, $\omega$ at time $t_{n}=n\De t$.
We apply the RK4 time discretization using \eqref{bvp2}, \eqref{eq:proj2}, \eqref{domega3}, \eqref{e:phi_dis},  and the boundary conditions defined {in \eqref{e:FV_space1}, \eqref{e:FV_space2}, and \eqref{e:FV_space3}}. We then set\\
\noindent \underline{Step 1}
\begin{equation}
\begin{split}
& \mathbf{k}_{1}^n=\mathbf{R}(\mathbf{u}^{n},\omega^n, t_{n}), \quad \mathbf{\tilde u}^{1,n}=\mathbf{u}^{n}+\De t\mathbf{k}_{1}^n,\\
& u^{1,n}=\tilde u^{1,n}+\De t\lambda_x^{1,n},\quad \ds{\mathcal{A}_h\omega_h ^{1,n}= \mathcal{F}_h(u_h^{1,n})},
\end{split}
\end{equation}

\noindent \underline{Step 2}
\begin{equation}
\begin{split}
& \mathbf{k}_{2}^n=\mathbf{R}(\mathbf{u}^{1,n},\omega^{1,n}, t_{n}+\frac{\De t}{2}), \quad \mathbf{\tilde u}^{2,n}=\mathbf{u}^{n}+\frac{\De t}{2}\mathbf{k}_{2}^n,\\
& u^{2,n}=\tilde u^{2,n}+\frac{\De t}{2}\lambda_x^{2,n}, \quad \ds{\mathcal{A}_h\omega_h ^{2,n}= \mathcal{F}_h(u_h^{2,n})},\\
\end{split}
\end{equation}

\noindent \underline{Step 3}
\begin{equation}
\begin{split}
& \mathbf{k}_{3}^n=\mathbf{R}(\mathbf{u}^{2,n},\omega^{2,n}, t_{n}+\frac{\De t}{2}), \quad \mathbf{\tilde u}^{3,n}=\mathbf{u}^{n}+\frac{\De t}{2}\mathbf{k}_{3}^n,\\
& u^{3,n}=\tilde u^{3,n}+\frac{\De t}{2}\lambda_x^{3,n}, \quad \ds{\mathcal{A}_h\omega_h ^{3,n}= \mathcal{F}_h(u_h^{3,n})},
\end{split}
\end{equation}

\noindent \underline{Step 4}
\begin{equation}
\begin{split}
& \mathbf{k}_{4}^n=\mathbf{R}(\mathbf{u}^{3,n},\omega^{3,n}, t_{n}+\De t), \quad \mathbf{\tilde u}^{n+1}=\mathbf{u}^{n}+\frac{\De t}{6}\left(\mathbf{k}_{1}^n+2\mathbf{k}_{2}^n+2\mathbf{k}_{3}^n+\mathbf{k}_{4}^n\right),\\
& u^{n+1}=\tilde u^{n+1}+\De t\lambda_x^{n+1},\quad \ds{\mathcal{A}_h\omega_h ^{n+1}= \mathcal{F}_h(u_h^{n+1})}.
\end{split}
\end{equation}
\section{Numerical simulations}
\label{sec4}
In this section we carry out numerical experiments.
We first modify \eqref{bvp} by removing $\mbf S$ and adding a source terms so 
that we see the effectiveness of the proposed scheme {from Section 
\ref{sec3.3}}.
{In Section \ref{sec4.1}, we test our scheme and {estimate its rate of 
convergence, numerically}.}
In Section \ref{sec4.2} and \ref{sec4.3}, {we perform physically 
plausible computations by solving} the full equations \eqref{bvp} supplemented 
with the proper boundary conditions in \eqref{bc}.
\subsection{Analytic case}\label{sec4.1}
In this section we use the following system of equations:
\begin{equation}
\begin{cases}
\label{eq6.1.1}
	  \displaystyle{\dfrac{\de \mbf u }{\de t}+\nabla_{x,p}(u\mbf u,\omega \mbf u)=\mbf B(\mbf u, \omega,t ),}\\
	  \ds{\omega=-\int_{p_A}^p\dfrac{\de u}{\de x}dp,}
\end{cases}
\end{equation}
where $\mbf B=(B_T,B_q,B_u)$ corresponds to the source terms derived by the analytical solution defined below.
We note that $\mbf B$ is different from $\mbf S$ in \eqref{e:notation}.
We set the domain as $[0,L] \times [p_A,p_B(x)]$ where $0 = 0$, $L=50,000$, $p_A = 100$ and $$p_B(x) = 1000 -200\exp\left(-\frac{(x-25000)^2}{3000^2}\right).$$
The function $p_B$ satisfies $\eqref{bc}_1$ approximately. Indeed, we can easily calculate
$$\frac{\de p_B}{\de x}\simeq10^{-31} \text{  at  } x=\{0,L\},$$ and the quantity is negligible compared to the other numerical errors.\\
We add the boundary conditions \eqref{bc} and the divergence free condition to $\eqref{eq6.1.1}_4$, we then choose:
\begin{equation}
\begin{split}
	  T_{EX}(x,p,t) =& - \frac{p}{R} \frac{\partial \phi}{\partial p}, \quad q_{EX}(x,p,t) = 0,\\
	  u_{EX}(x,p,t) =& -\frac{\partial \xi}{\partial p},\quad w_{EX}(x,p,t) = \frac{\partial \xi}{\partial x},
	\end{split}
 \label{exsol}
\end{equation}
and
\begin{equation}
	  g_T(p)=- \frac{p}{R} \frac{\partial \phi}{\partial p}\Big|_{x=0,t=0},\enspace 
	  g_q(p)=0, \enspace g_u(p)=-\frac{\partial \xi}{\partial p}\Big|_{x=0,t=0},
\end{equation}

where
\begin{equation}
\begin{split}
	  \xi(x,p,t) =& \left ( \frac{p-pA}{100} \right)^3 \left (\frac{p-pB(x)}{100} \right )^3 \cdot(\cos(2 \pi t) + 20)\cdot\frac{x^3(x-L)^3}{L^6}, \\
	  \phi(x,p,t) =& \left[ \Big (\frac{p-pB(x)}{450} \Big )^3+ \left\{(-R(T_0 - \Delta T)\log(p)- R \frac{\Delta T}{p_0} p\right.\right. \\
	      & +R (T_0-\Delta T) \log(p_0) + R\Delta T)\Big\}/g\Big]\cdot \cos(2 \pi t)\cdot\frac{x(x-L)^2}{L^3}.
\end{split}
\end{equation}
Using these analytic functions, we find the rate of convergences for the 
proposed scheme.
In the simulations, we set $\Delta t = 10^{-2}$, and $[Nx,Np]$ = $[100,100]$, 
$[150,150]$,  $[200,200]$, $[250,250]$, and $[300,300]$ to check the 
convergence.\\
Table \ref{tb1} and Figure \ref{f1} show the relative $L^2$ errors at $t_f=\De t \times k$, where $k=100$, for different spatial discretizations.
Here, we define the relative $L^2$ error for e.g. $T$ by
\begin{equation}
	\|T\|_{Error}:=  \sqrt{\frac{{ \sum_{i,j} |C_{i,j}| \left[T_{EX}(x_{ij},p_{ij},t_k)-T_{NUM}(x_{ij},p_{ij},t_k)\right]^2}}{{ \sum_{i,j} |C_{i,j}|  T^2_{EX}(x_{ij},p_{ij},t_k)}}},
	\label{L2_er}
\end{equation}
where $T_{EX}$ is the exact solution in \eqref{exsol} and $T_{NUM}$ is the numerical solution of \eqref{bvp2} in \eqref{eq6.1.1}.
Also, we denote $|C_{i,j}|$ is the area of the $(i,j)$-th cell.
The relative $L^2$ errors for $u$ and $\omega$ are obtained in the same way.
In Table \ref{tb1}, we observe the rate of convergence of the numerical solutions $T$, $u$ and $\omega$.
Figure \ref{f1} shows a first order convergence of the numerical solutions for our scheme, {with $t_f=100\Delta t$. We observe the same rate of convergence for greater $t_f$.} \\
\begin{table}[H]
\caption{Relative $L^2$ errors for \eqref{eq6.1.1} with \eqref{bc} at $t_f=100\De t$ where $\De t = 10^{-2}$} 
\centering 
\begin{tabular}{c c |c c c} 
 \\[0.1ex] \hline\hline 
$N_x$ & $N_p$ & $\|T\|_{Error}$ &$\|u\|_{Error}$&$\|\omega \|_{Error}$\\ [0.5ex] 
\hline 
 100 & 100  & 7.209e-07 &  1.023e-04   &  1.466e-02 \\ 
150 & 150 & 4.002e-07  &  6.722e-05  &  6.615e-03    \\
200 & 200 & 2.631-07 &  5.014e-05   & 3.764e-03    \\
250 & 250 & 1.904e-07 &  3.997e-05   & 2.435e-03    \\
300 & 300 & 1.466e-07 & 3.325e-05 &   1.708e-03 \\[1ex]
\hline 
\end{tabular}
\label{tb1} 
\end{table}

\begin{figure}[H]
\centering
 \includegraphics[scale=.7]{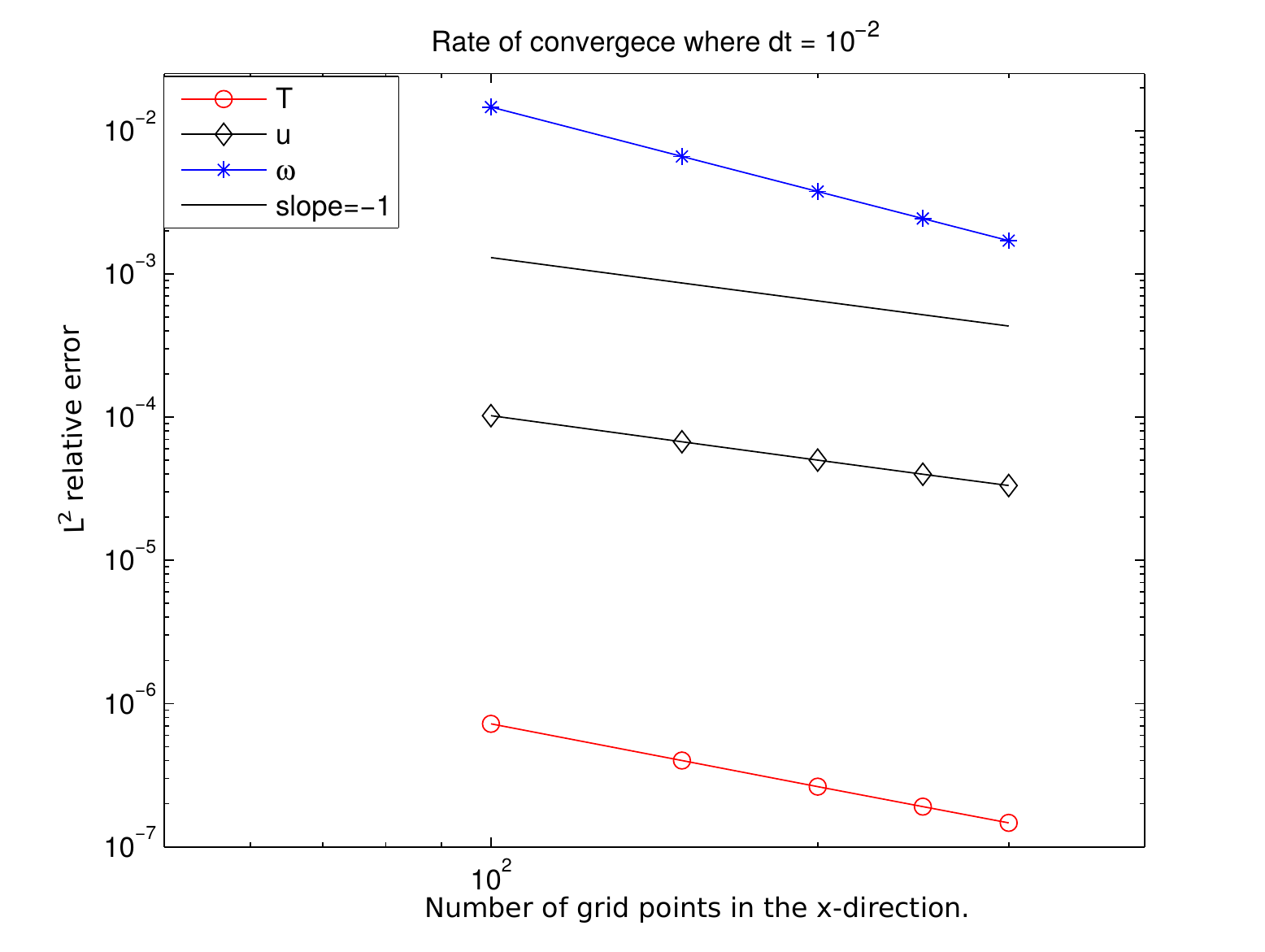}
\caption{The convergence of the relative $L^2$ errors in log-log scale based on Table \ref{tb1}. The slopes of the log-log plots for $\|T\|_{Error}$, $\|u\|_{Error}$ and $\|\omega\|_{Error}$ (defined in \eqref{L2_er}) are 1.44, 1.02, and 1.95, respectively.} 
\label{f1}
\end{figure}
\noindent

\subsection{Physical case: deterministic simulations}
\label{sec4.2}
In this subsection we solve \eqref{bvp} with the physical boundary conditions in \eqref{bc}.
To perform numerically stable computations, we consider an averages in space.
For instance, we update the values of $T_h^m$ for some time step $m$:
\begin{equation}
	    T_{i,j}^{m}=\frac{ T_{i,j}^{m}+ 
T_{i-1,j}^{m}}{2},\mbox{ for } 1 \leq i \leq N_x.
\end{equation}
In our simulation we average $T$ for every 18 time step, i.e. $m=18n$. For 
$u$, $\omega$, and $\phi$, 
we average the cells in the same way but we take $m=n$.
\\~~\\
\\
\noindent\textbf{The initial conditions.}\\
\\
We recall that the temperature can be written as $T(x,p,t)=\bar{T}(p)+T'(x,p,t)$ and we take $T'(x,p,t=0)=0$; therefore
\begin{equation}
\label{T02}
T(x,p,0)=\bar{T}(p)=T_0-\left(1-\frac{p}{p_0}\right)\Delta T,
\end{equation}
where $T_0=300 K$ and $\Delta T=50$.
The initial value of the humidity $q$ is
\begin{equation}
	  	  q(x,p,t=0) = q_s - 0.0052,
\label{q0}
\end{equation}
where $q_s$ is the saturation defined in \eqref{qs}.
Figure \ref{f:q_init} shows the shapes of the initial condition for $q$ and the 
saturation $q_s$ at a certain height (around 200m away from the earth). Note that we choose a slightly under saturated initial condition for $q$ to see how the mountains produce saturation and rain, that is,
$$
	  q(x,p,t=0) < q_s.
$$
\begin{figure}[h]
    \centering
    \includegraphics[scale=0.45]{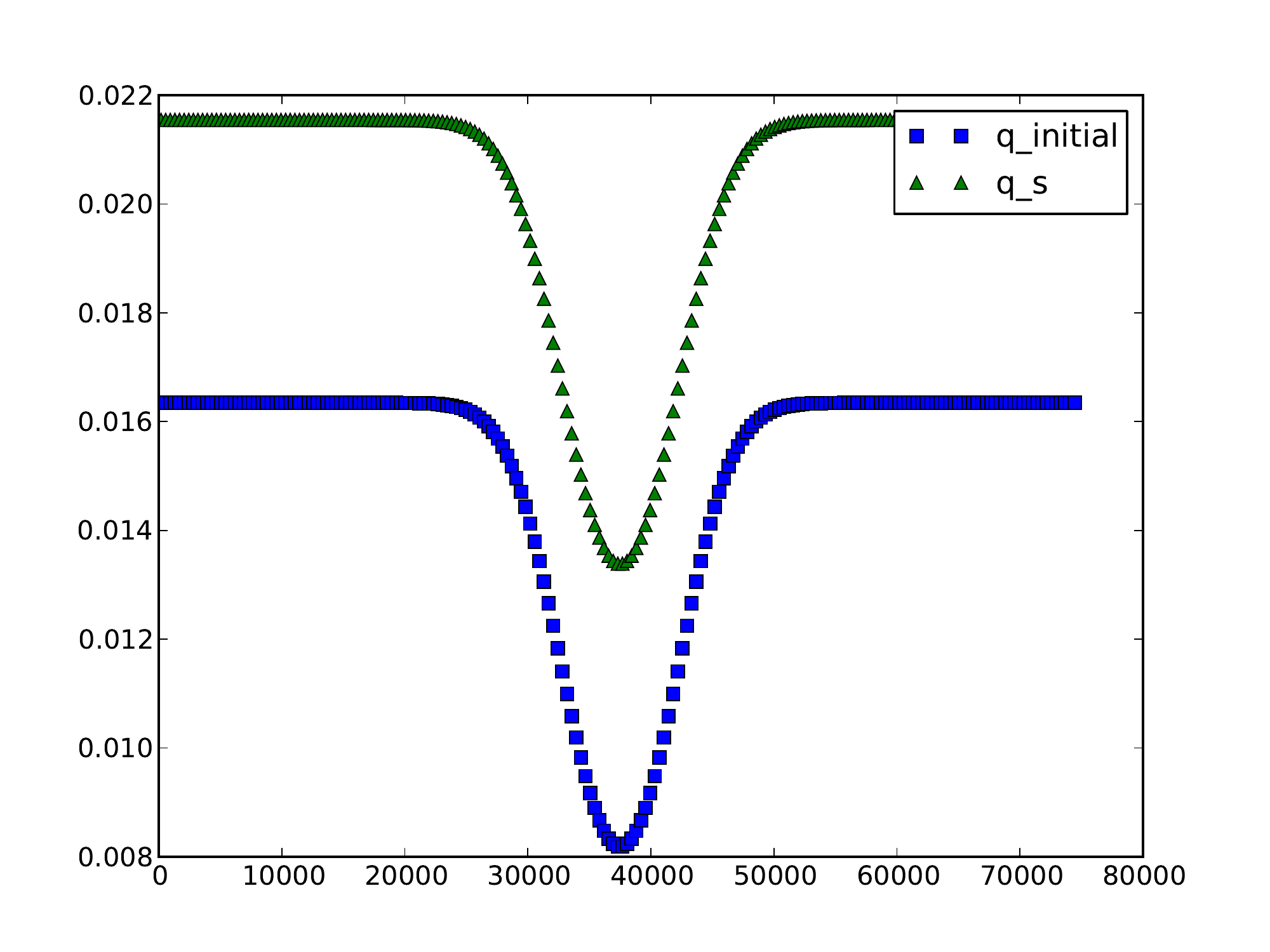}
    \caption{$q$ and $q_s$ at $t=0$ at a certain height (around 200m away from the earth).}
\label{f:q_init}
\end{figure}
To describe the initial condition for the velocity $u$, we first 
introduce the intermediate value $\tilde{u}$ which the velocity before it is 
projected.
We then write 
$$\tilde{u}(x,p,t)=\bar{u}+u'(x,p,t),$$
where
\begin{equation}
	  \bar{u}=7.5m/s,~~u'(t=0)=2\cos\left( \frac{p\pi}{p_0}\right)\cos\left(\frac{2n\pi x}{L}\right).
\end{equation}
This gives the initial condition for $\tilde u$:
\begin{equation}
\label{u0}
	  \tilde{u}(x,p,0) = 7.5 + 2\cos\left( \frac{p\pi}{p_0}\right)\cos\left(\frac{2n\pi x}{L}\right).
\end{equation}
Note that $\tilde u$ does not follow the compatibility condition in 
\eqref{eq2.2.7c}. Hence, after applying the projection method {described 
in \eqref{eq:proj1} and \eqref{eq:proj2},}
we obtain the  initial condition for $u$, that is, $u = \mc F_h\left( \tilde u \right)$.
{Figure \ref{proj_phy} reports on the striking numerical advantage of using such a method for a simulation of  $u$ in good agreement with the natural constraints associated with the problem at hand such as the compatibility condition 
defined in \eqref{eq2.2.7c} emphasizing the constant horizontal profile that a vertical integration of $u$ must satisfy. As one can observe on Fig.~\ref{proj_phy}, when the projection method
is applied,   the deviations from such a constant horizontal 
profile  are reduced by a factor $10^{4}$, {\mk reducing in other words, the error of the simulated $u$ in satisfying \eqref{eq2.2.7c} by the same factor.}}

For the boundary condition at $x=0$, we use
\begin{equation}
 \begin{split}
& g_T(p)=\bar T(p),\\
& g_q(p)=q_s(\bar T,p),\\
& g_u(p)=\mathcal F_h\left(\bar u + 2\cos\left(\frac{p\pi}{p_0}\right)\right).
 \end{split}
\end{equation}\\
In the following simulation, we choose the parameters:
\begin{equation}
\begin{split}
 & p_A = 250, \quad [0,L] = [0,75000],\\
 & [Nx,Np]=[200,200], \quad t \in [0,20000], \quad \De t= 0.5,\\
 & p_B(x) = 1000 -250\exp\left(-\frac{(x-37500)^2}{6000^2}\right)
 \end{split}
\end{equation}

\begin{figure}[H]
\centering
	\includegraphics[scale=.6]{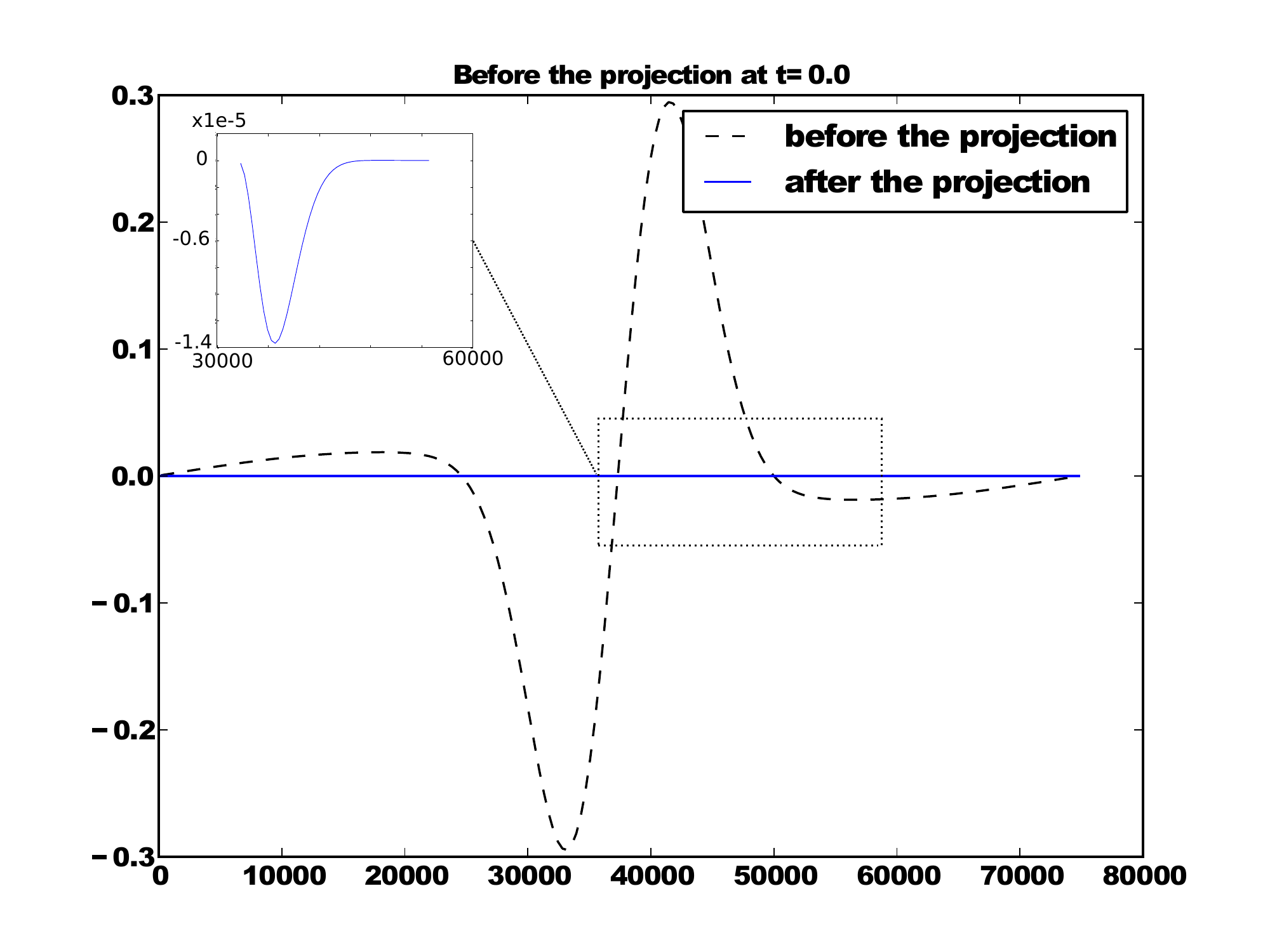}
\caption{ Computations of the quantity $Q=\frac{\partial }{\partial 
x}\int_{p_A}^{p_B}u \text{ }dp$, before and after the 
projection.}
\label{proj_phy}
\end{figure}
Figure \ref{f4} shows {\mk snapshots} of $T$ (temperature) and $q$ (humidity) at different times. Since $u$ is positive, the flow moves from West to East.
On the upstream side of the mountain, the temperature is lower and the humidity 
is higher, whereas on the downstream side of the mountain, the temperature is 
higher and the humidity is lower for sufficiently large $t$. These results are 
coherent with the physical context.
Note that in Figure \ref{f4} we magnify the value of $T$ and $q$ near the 
ground to see in detail the behavior of $T$ and $q$.
Figure \ref{f4_uw} shows behaviors of $u$ (horizontal velocity) and $\omega$ (vertical velocity).
In Figure \ref{f:profile_T_q}, we present the 1D curve for $T$ and $q$ at $t=15000$ along the mountain, i.e. along the dotted line in Figure \ref{f:slice_1D}. We observe that the figures show asymmetries: it is more humid on the left, and warmer on the right.
There are consistent with the fact that it rains on the left side of the mountain, where the wind comes from.
Figure \ref{f:L2_energy} {\mk shows the time-evolution of the (spatial) $L^2$-norm
of $T$, $q$, $u$, and $\omega$. We observe that the numerical solutions {\mk reach a steady state for sufficiently large $t$ (e.g. $t > 15000$), while the time-evolution the $L^2$-norm of $T$, $q$, and $u$ exhibit a transient growth that suggests the presence of nonnormal modes which in the present context can be explained as resulting from the topography which breaks the symmetry of the (physical) domain leading typically to non-orthogonal modes associated with the linearized operator. 
Under these circumstances, disturbances can develop in the system that is favorably
configured to undergo rapid transient growth, even in the absence of any growing modes.
Because the modes are non-orthogonal, they have a non-zero projection on one another,
so it is possible to superpose them to produce disturbances that initially grow rapidly.  Such a growth can be further amplified by nonlinear effects or by noise such as documented in the literature; see {\it e.g.} \cite{penland1995optimal} and the supporting information of \cite{chekroun2011predicting} for an analogous situation in a simple model. We turn now to the investigation of such a phenomena in the next section by including some random small-scale disturbances in the model formulation that as we will see cause the appearance of propagating (modulated) waves in response to such a noise-forcing. We refer to \cite{farrell1988optimal},\cite{farrell1996generalized}, \cite{moore1997singular}, \cite{reddy1993pseudospectra}, \cite{trefethen1993hydrodynamic}, and \cite{trefethen2005spectra} for manifestations of nonnormal modes in various settings borrowed from hydrodynamic or geophysical fluid dynamics.}
}
\begin{figure}[H]
\centering
	  \includegraphics[scale=.35]{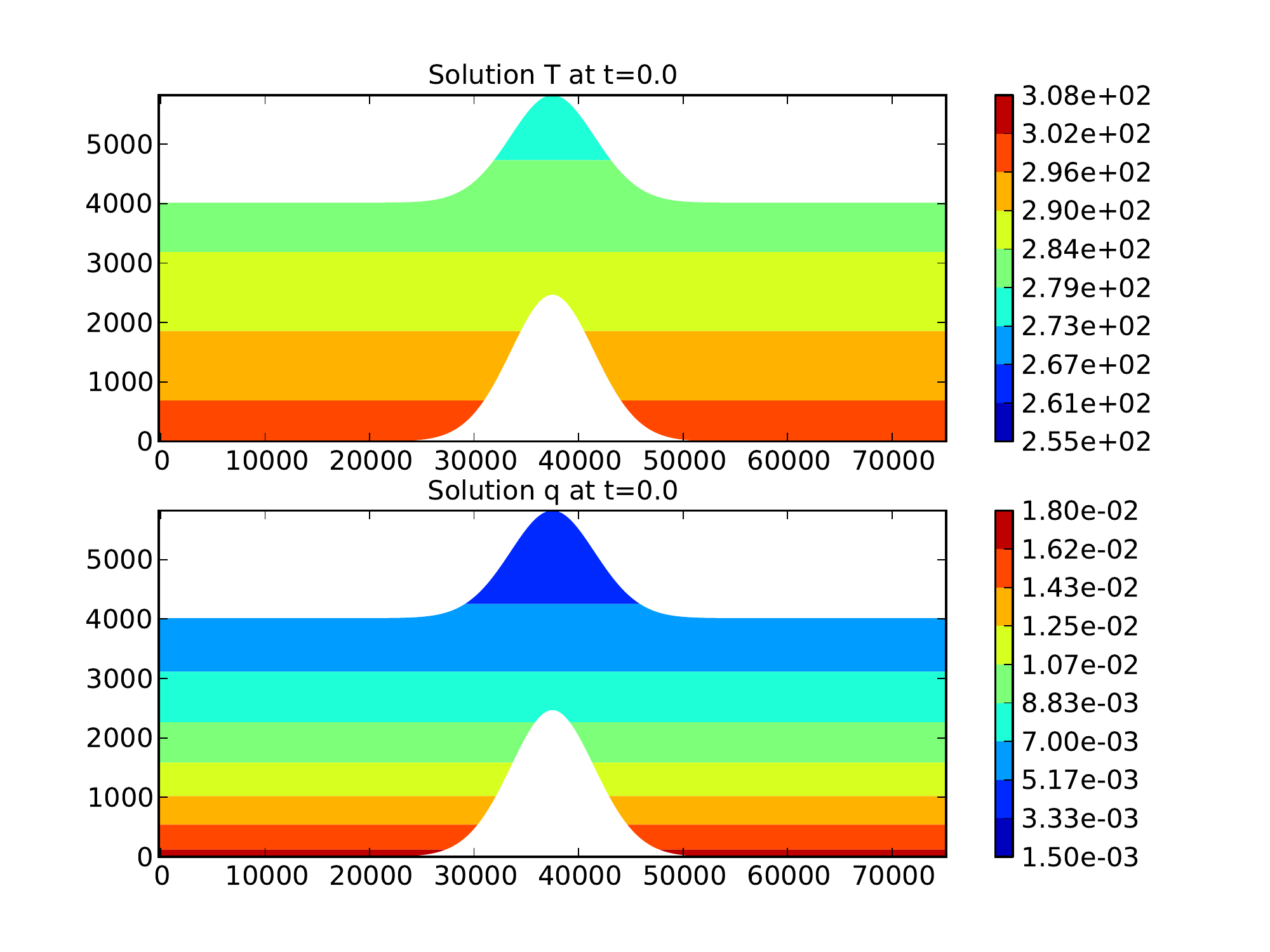}
	  \includegraphics[scale=.35]{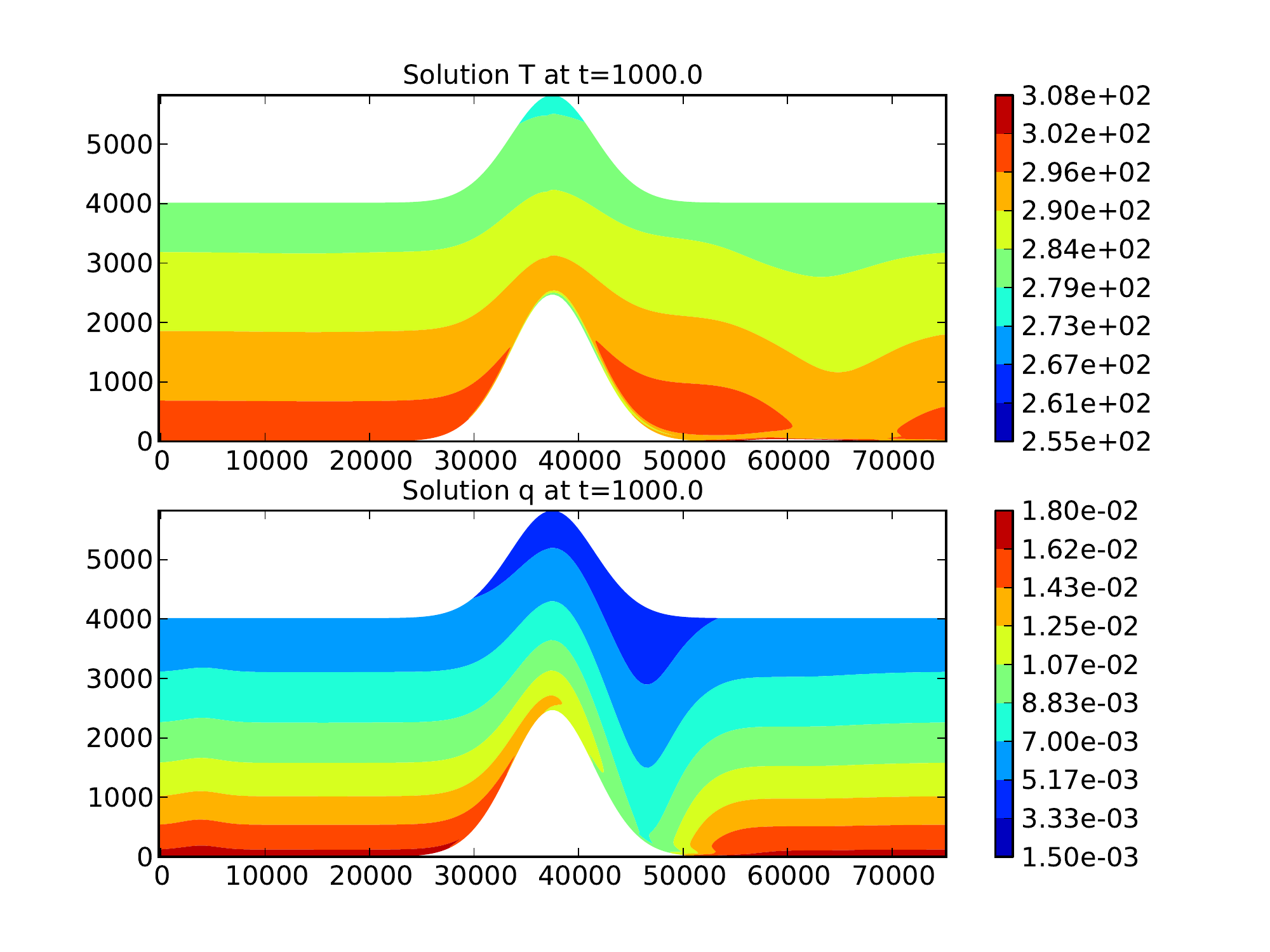}
	  \includegraphics[scale=.35]{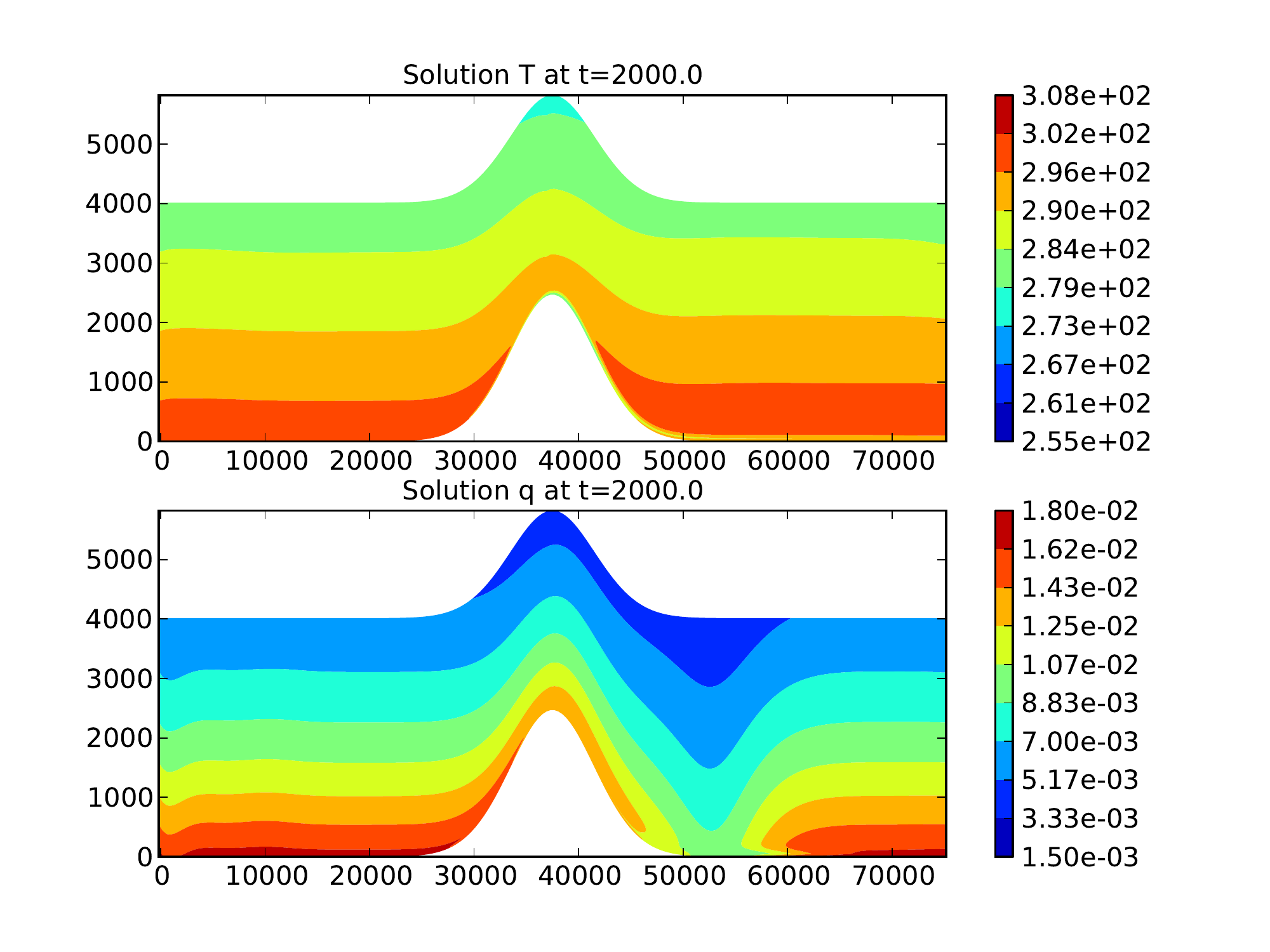}
	  \includegraphics[scale=.35]{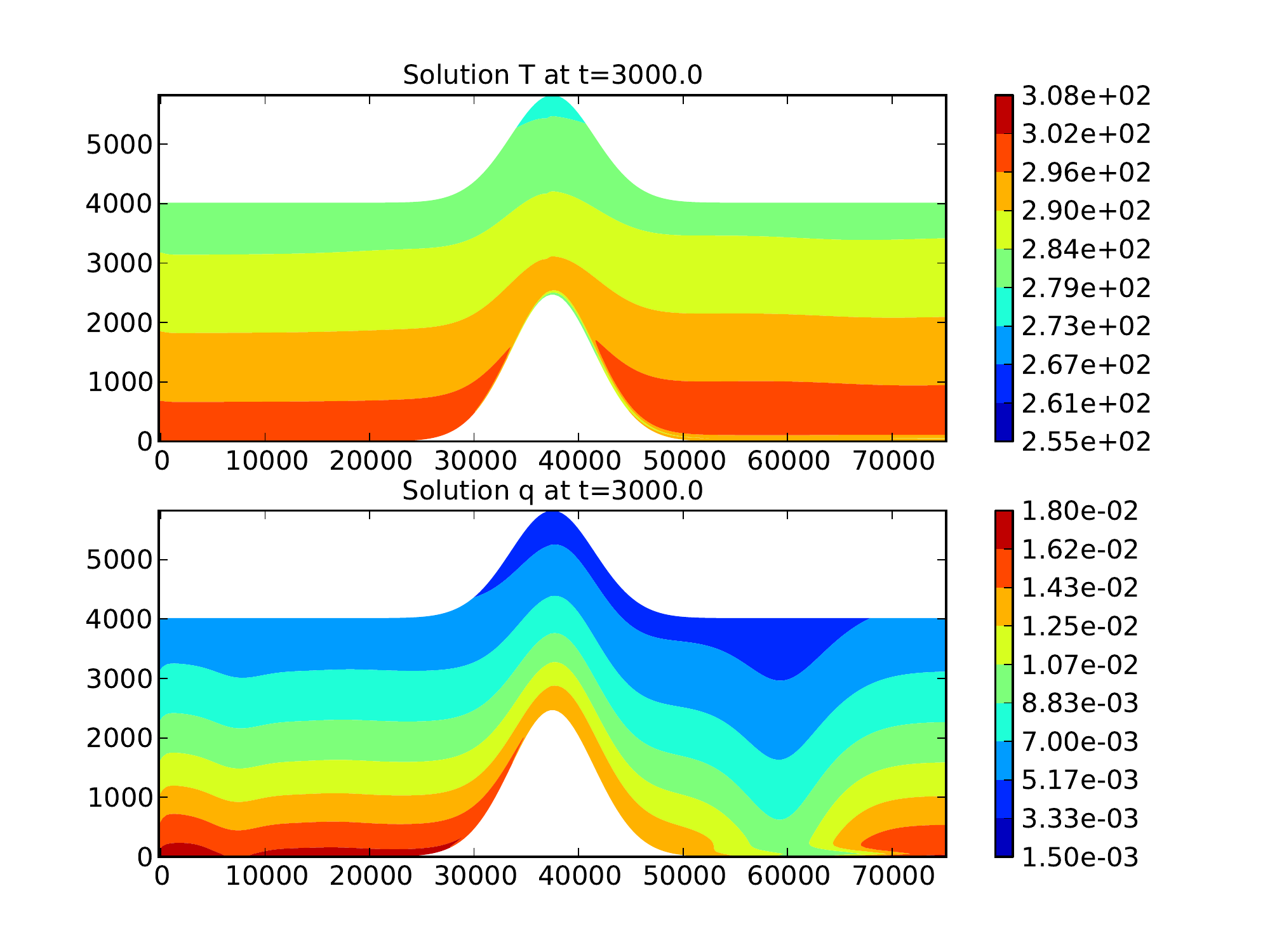}
	  \includegraphics[scale=.35]{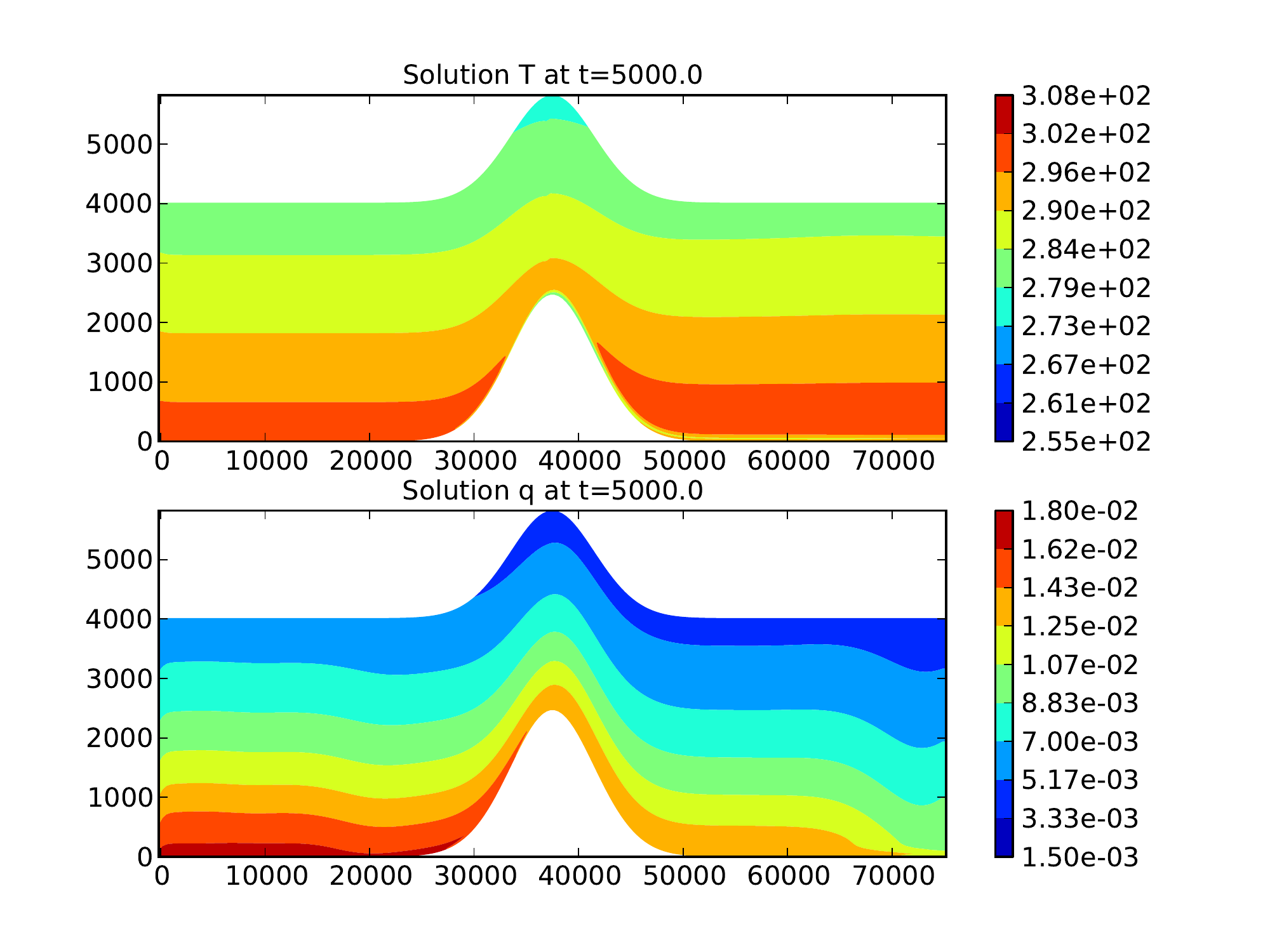}
	  \includegraphics[scale=.35]{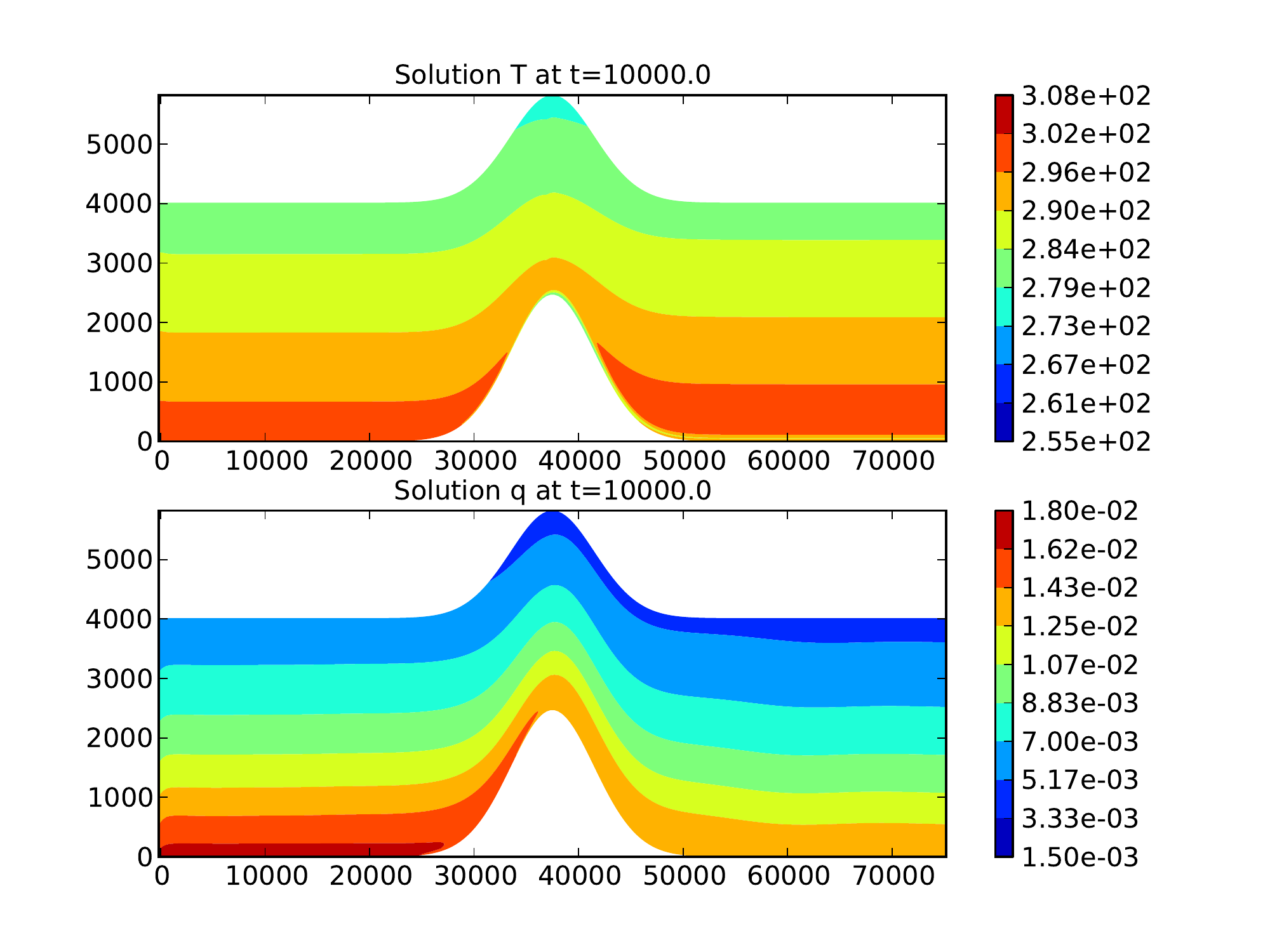}
\caption{Solutions of \eqref{bvp} and \eqref{bc} for $T$ and $q$ at $t=$0, 1000, 2000, 3000, 5000, 10000.}
\label{f4}
\end{figure}

\begin{figure}[H]
\centering
	  \includegraphics[scale=.35]{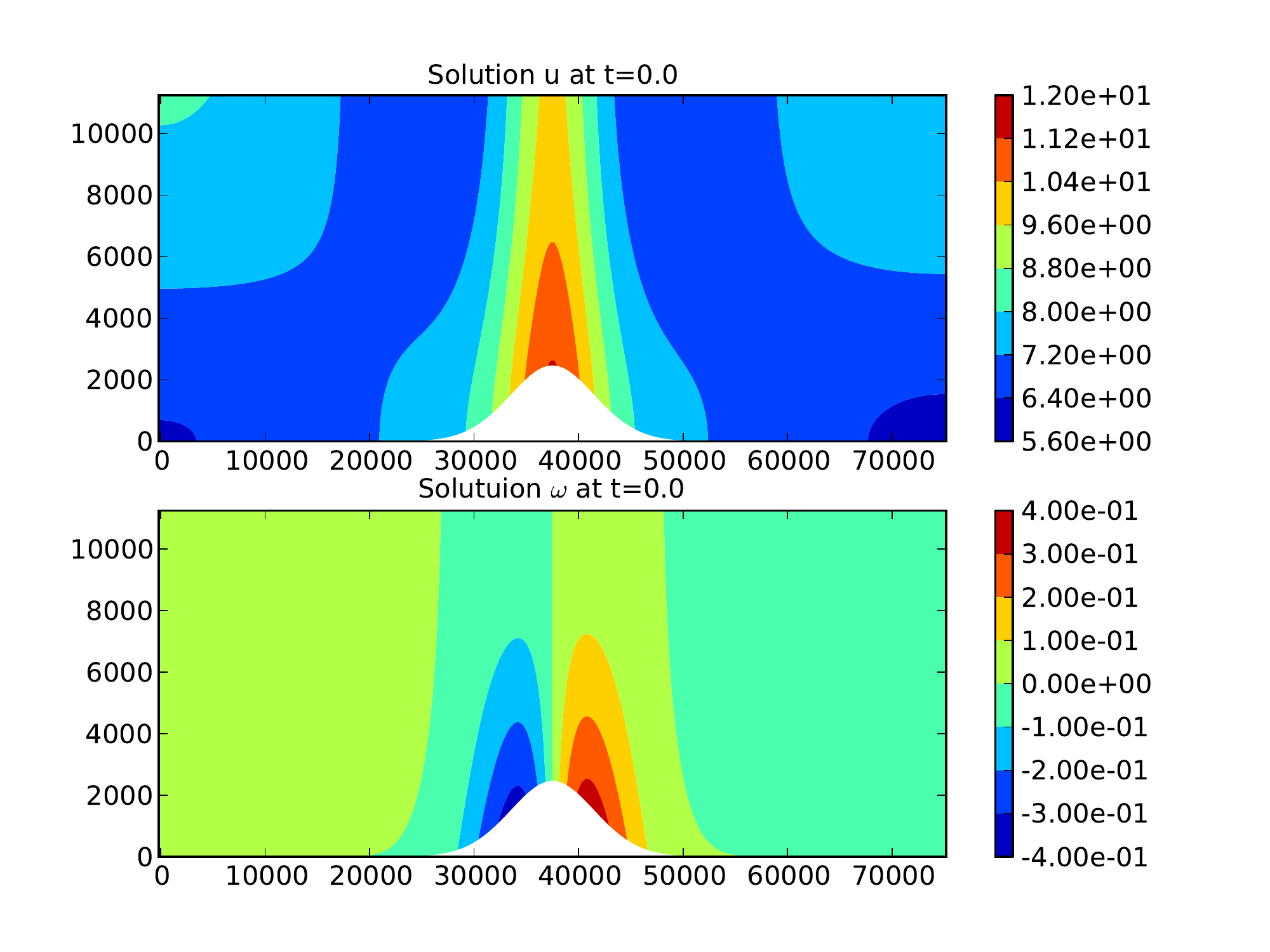}
	  \includegraphics[scale=.35]{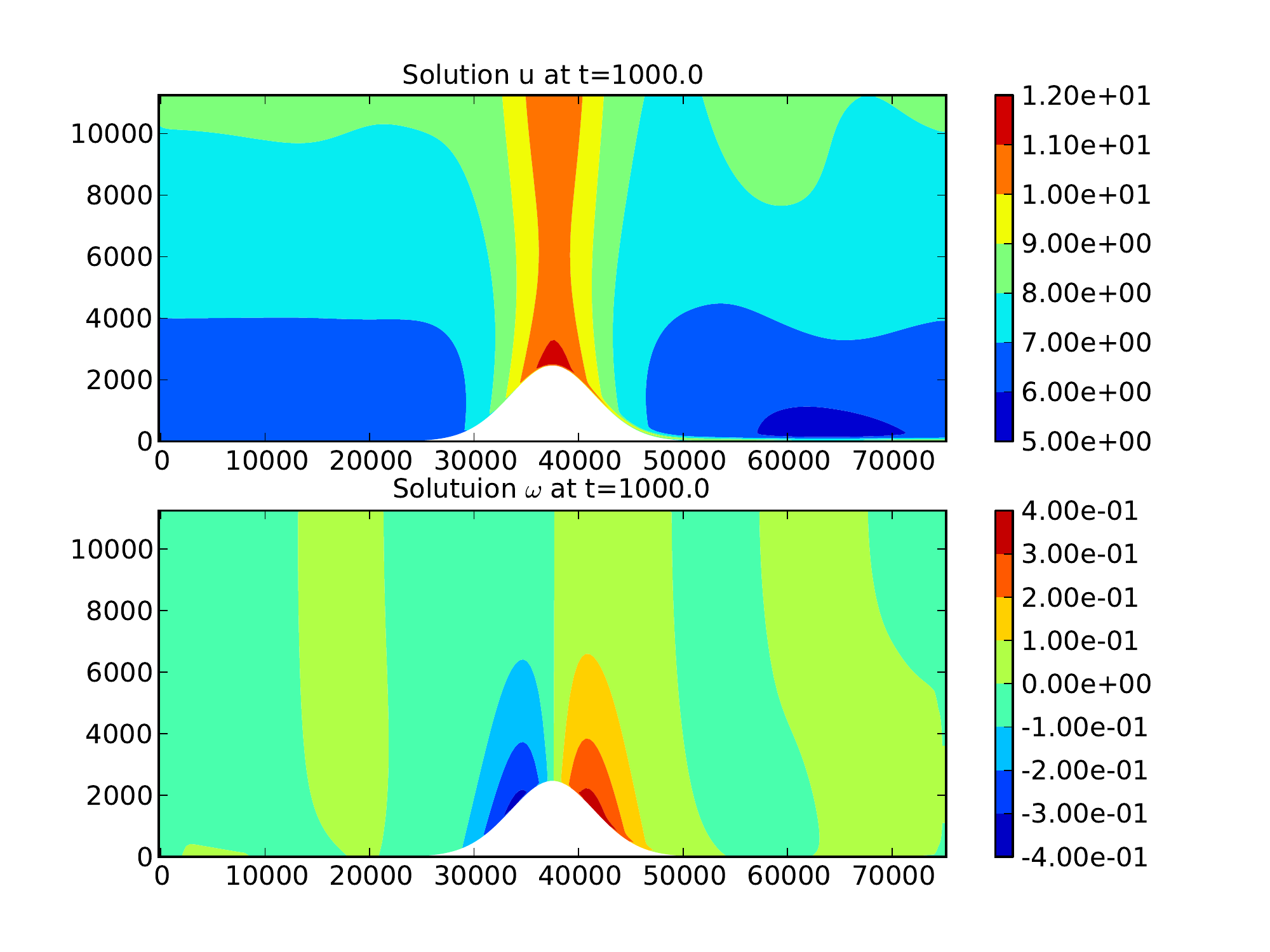}
	  \includegraphics[scale=.35]{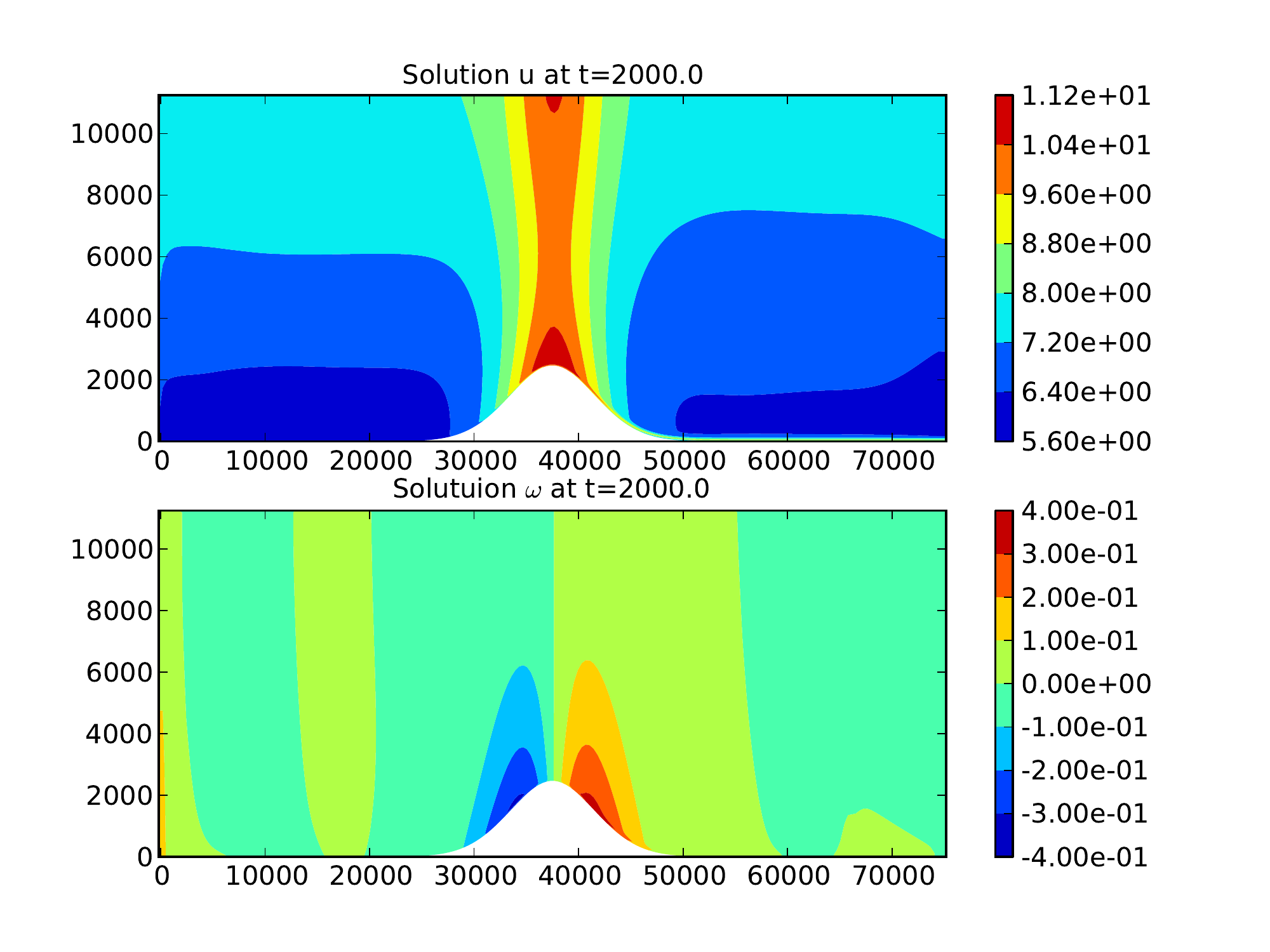}
	  \includegraphics[scale=.35]{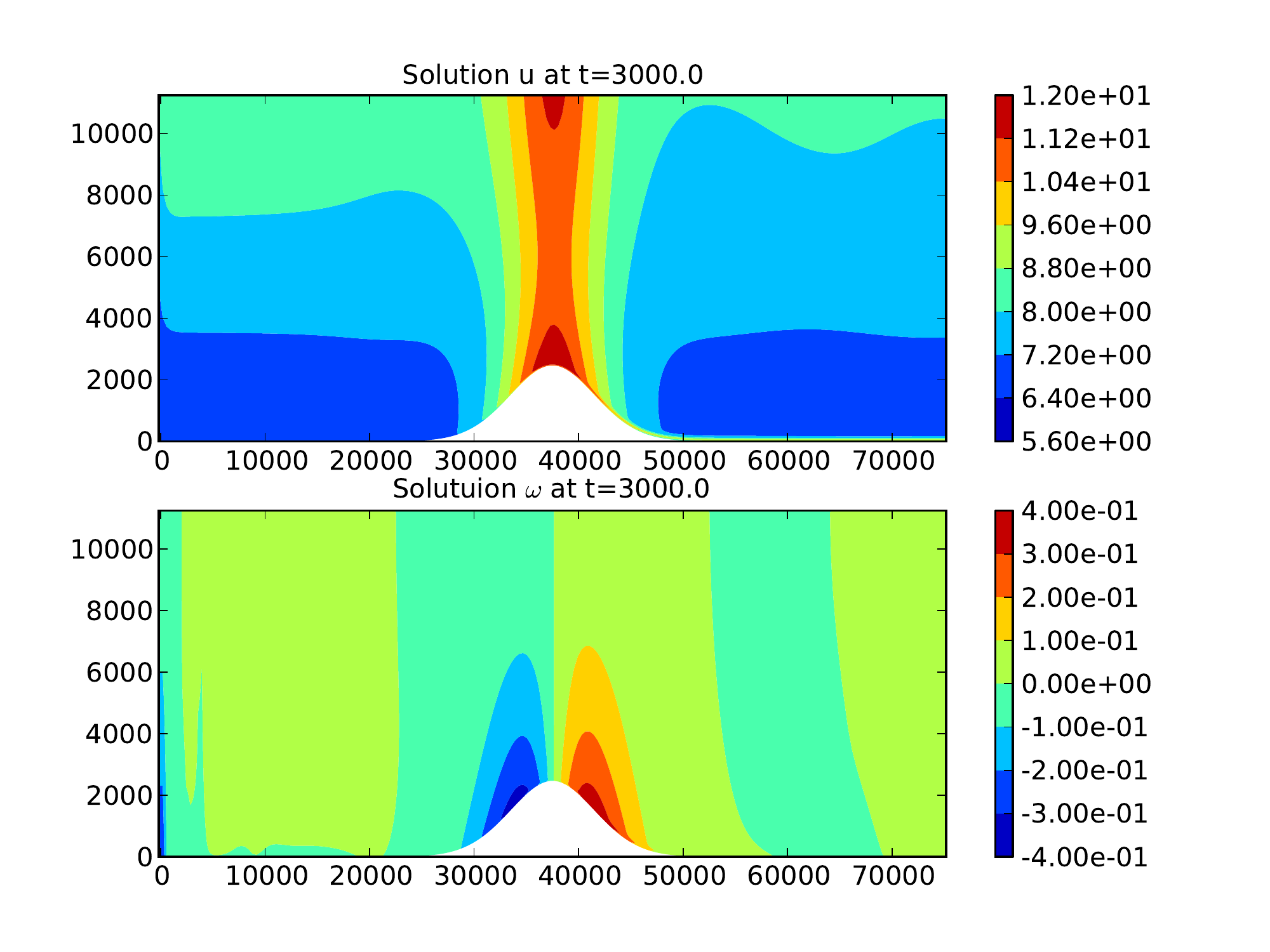}
	  \includegraphics[scale=.35]{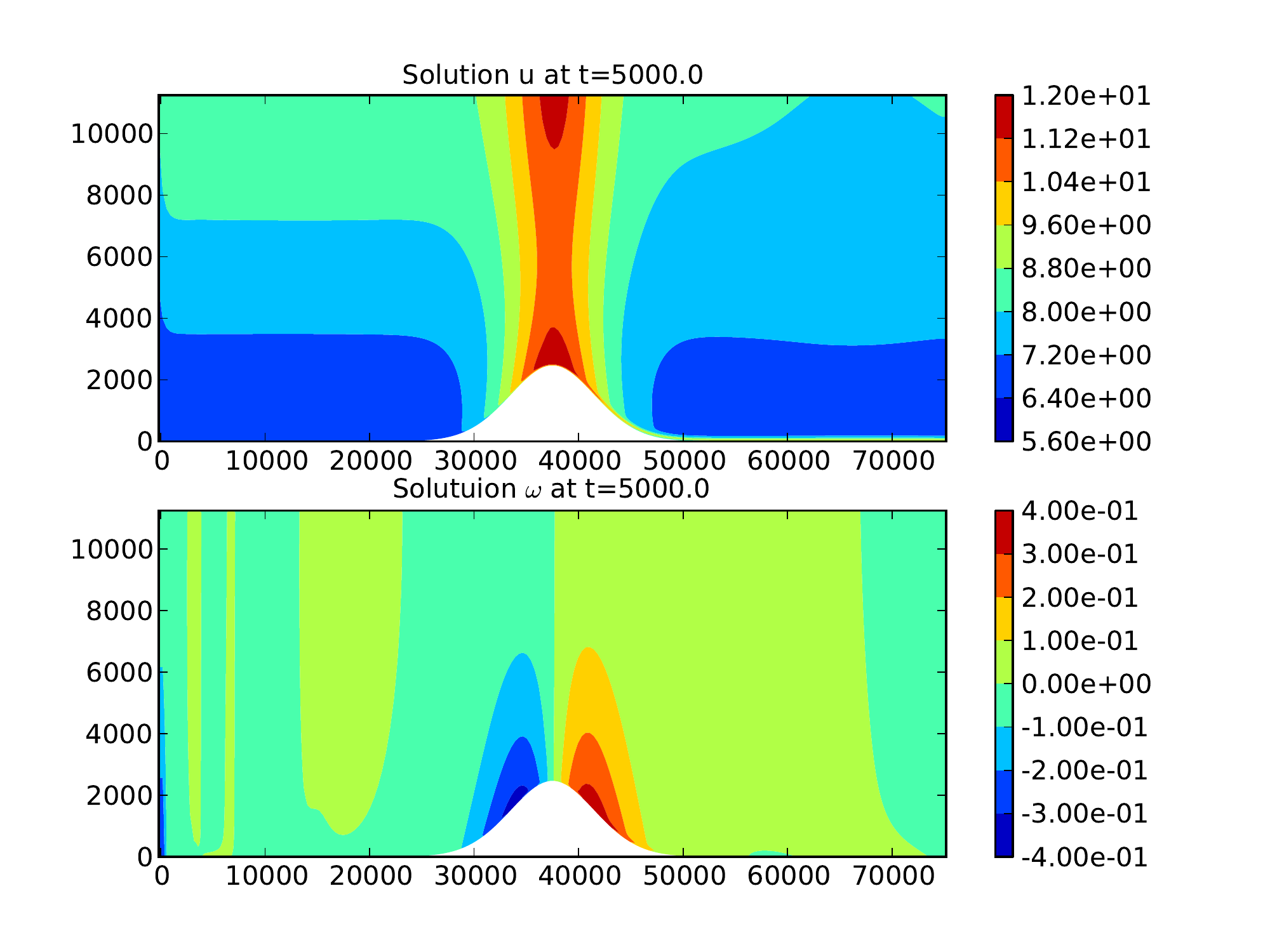}
	  \includegraphics[scale=.35]{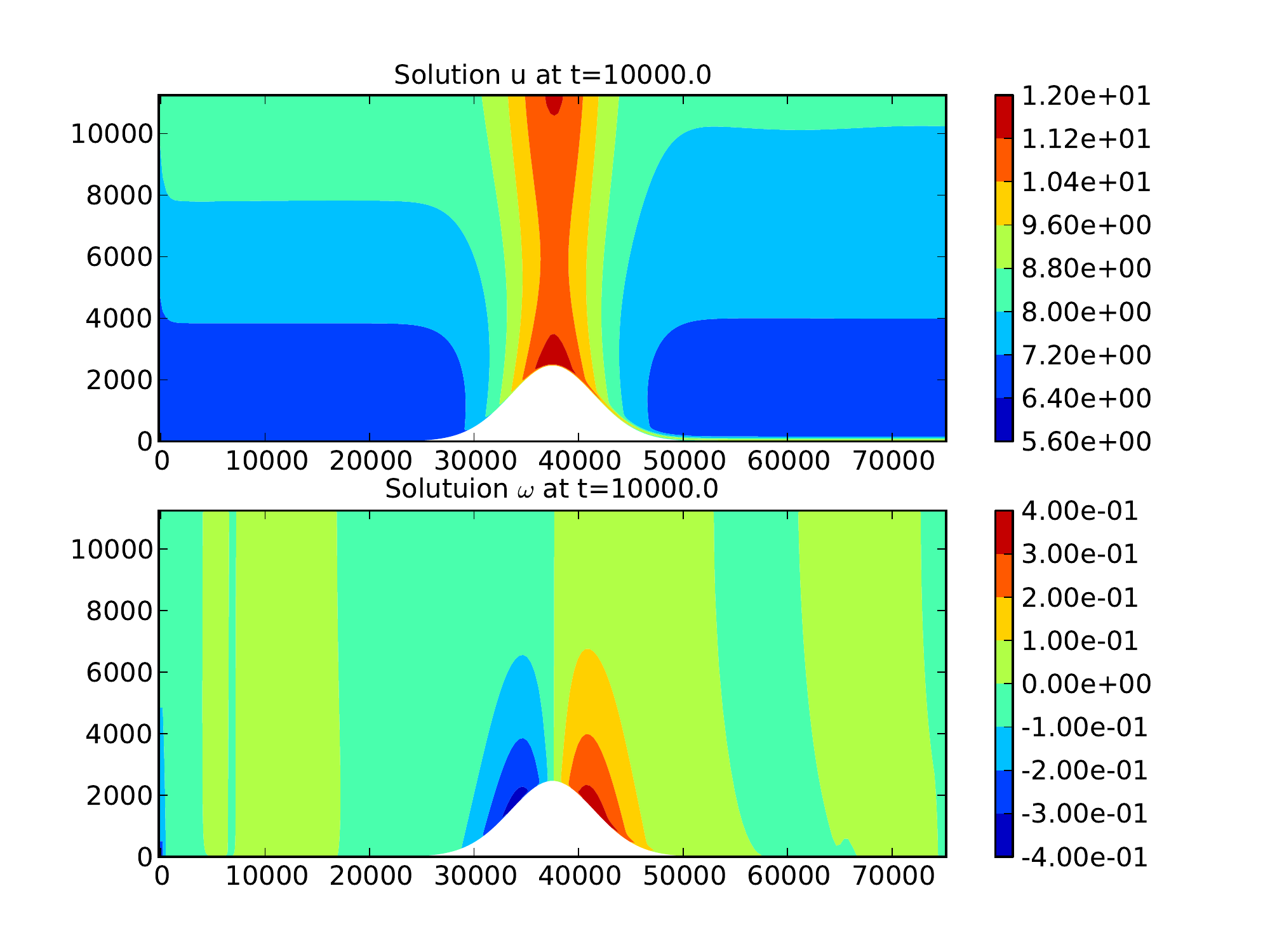}
\caption{Solutions of \eqref{bvp} and \eqref{bc} for $u$ and $\omega$ at $t=$0, 1000, 2000, 3000, 5000, 10000.}
\label{f4_uw}
\end{figure}

\begin{figure}[H]
\centering
	  \includegraphics[scale=.5]{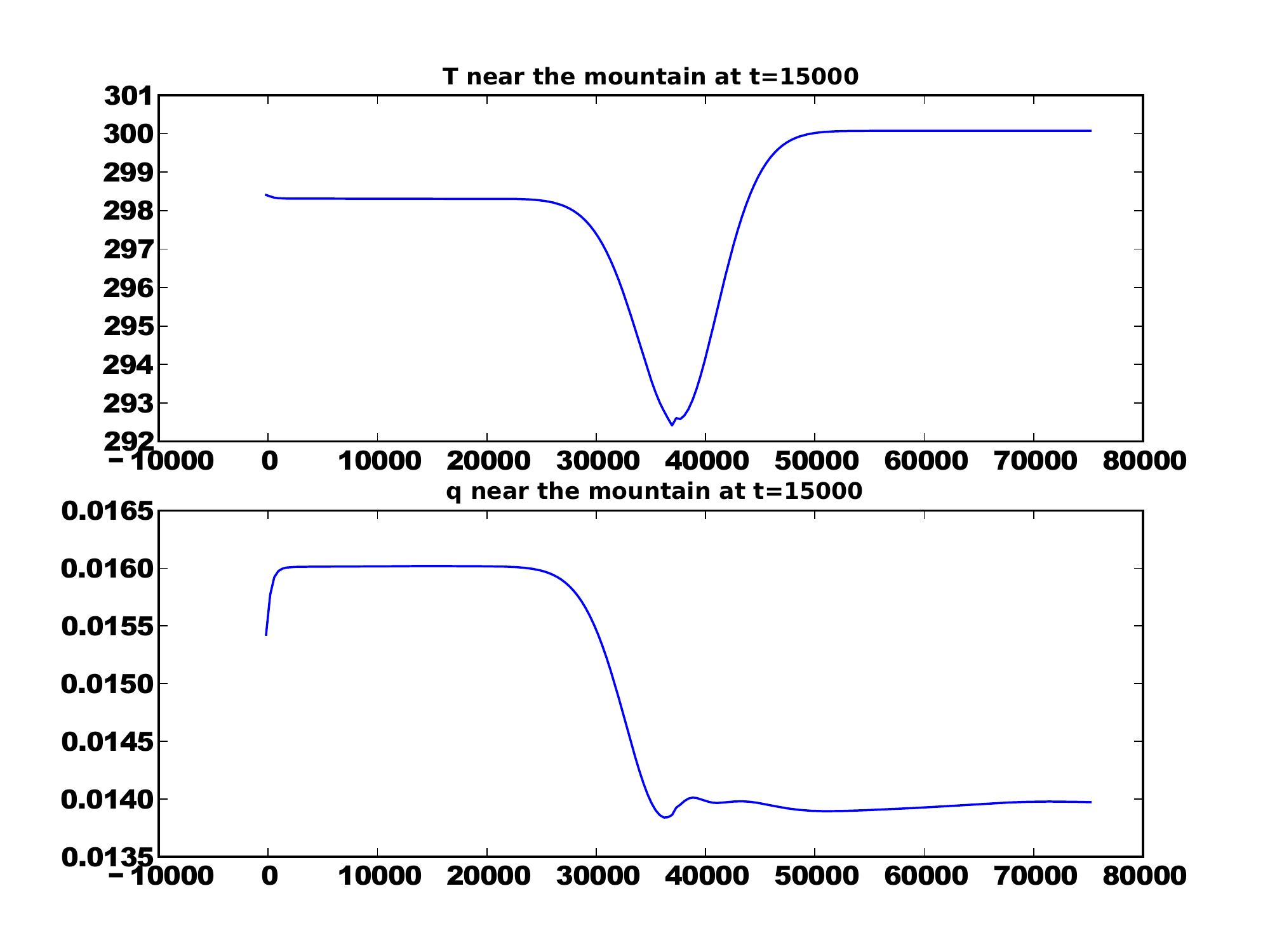}
\caption{1D profile of T and q at t=15000 along the dotted line in Figure \ref{f:slice_1D}.}
\label{f:profile_T_q}
\end{figure}

\begin{figure}[H]
\centering
	  \includegraphics[scale=0.9]{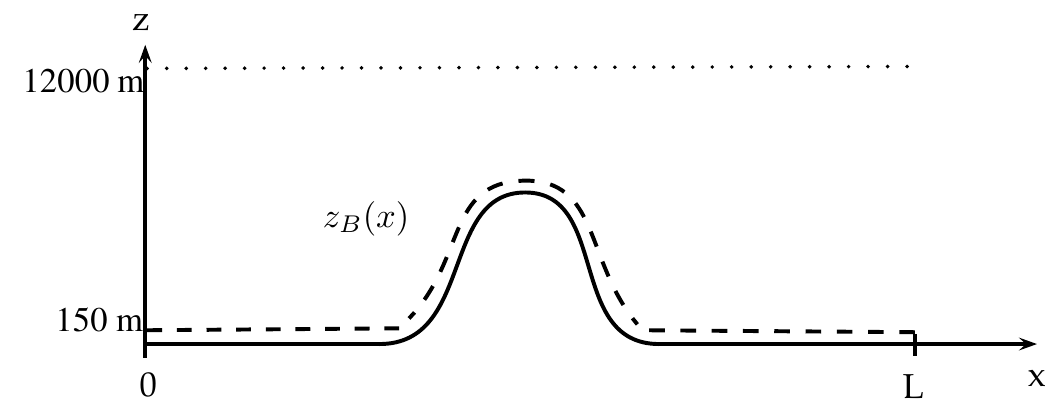}
\caption{Sample path along the mountain for 1D profile used in Figure \ref{f:profile_T_q}}
\label{f:slice_1D}
\end{figure}

\begin{figure}[H]
\centering
	  \includegraphics[scale=0.35]{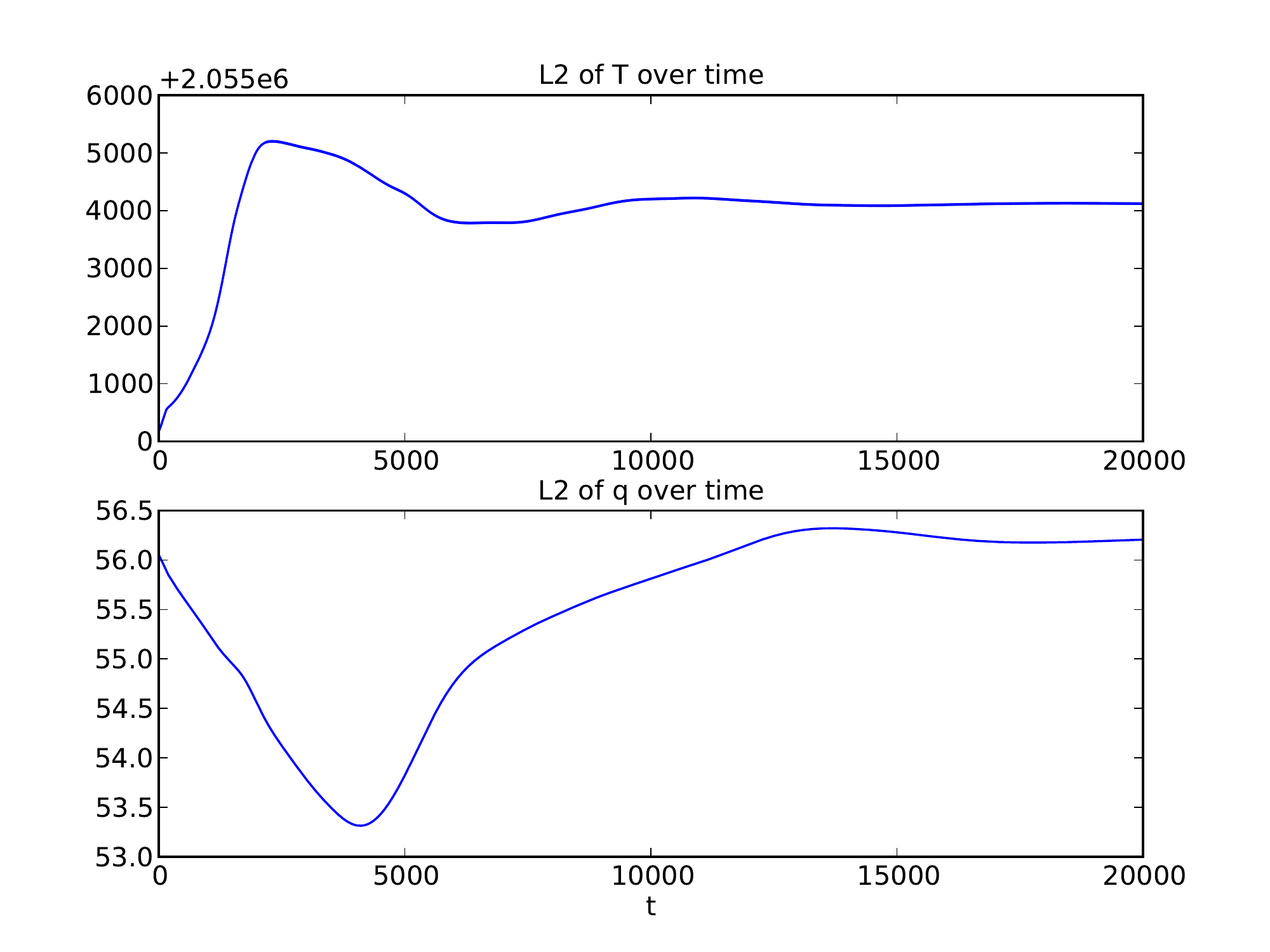}
	  \includegraphics[scale=0.35]{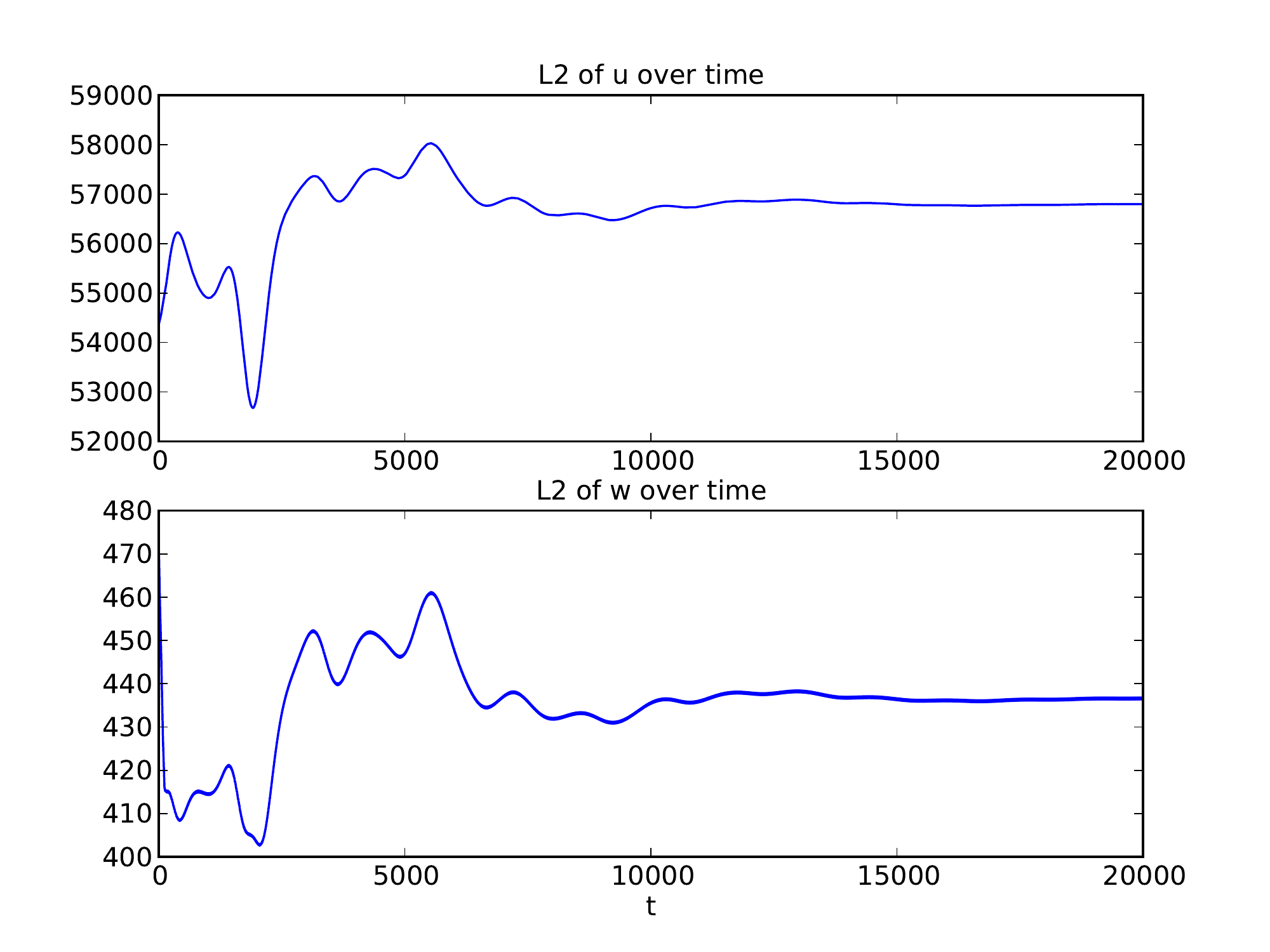}
\caption{{\mk Time-evolution of the (spatial) $L^2$-norm of $T$ and $q$ (on the left), and of $u$ and $\omega$ (on the right).}}
\label{f:L2_energy}
\end{figure}

\begin{figure}[!Hbtp]
      \centering    
      \includegraphics[height=0.7\textwidth, 
width=1\textwidth]{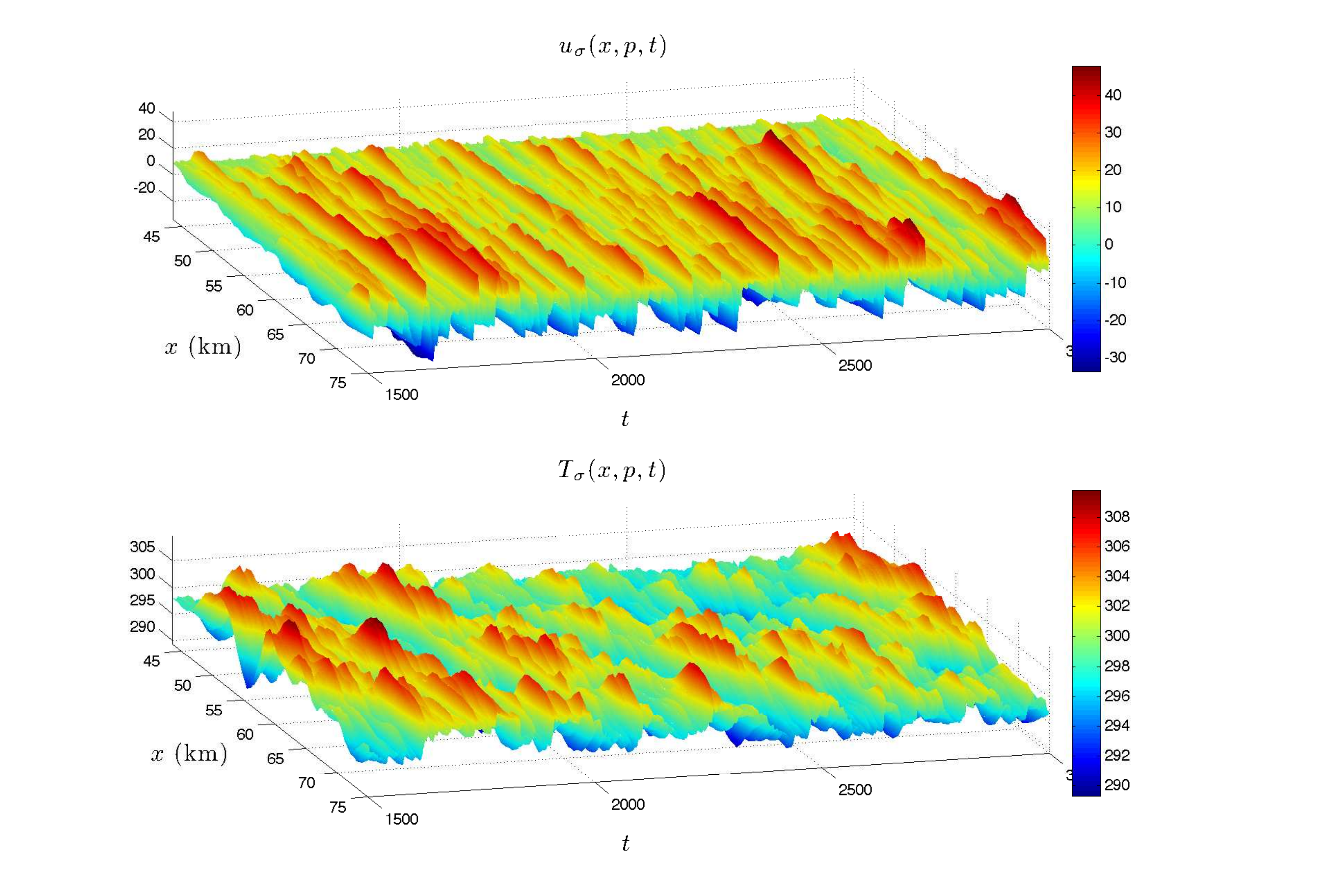}
      \caption{{\footnotesize {\bf Top panel}: Time evolution of the horizontal 
velocity profile  $u_{\sigma}(x,p,t)$ for $\sigma=5$, for $x\in [43.5 \mbox{ 
km}, 75 \mbox{ km}]$ (mainly over which the bombardment occurs) and at a 
``fixed'' value of the topography level.
      {\bf Bottom panel:} Same for the temperature profile $T_{\sigma}(x,p,t)$. 
Both fields exhibit modulated waves traveling eastward. Here $480$ cells  are 
randomly ``bombarded''  according  to \eqref{Eq_noise_form} as time flows, for a 
spatial resolution of the model corresponding to $N_x=N_p=100$. The  figure 
shows that large-scale recurrent patterns arise from such a small-scale random 
forcing.}} 
      \label{fig1_noise}
   \end{figure}

\subsection{Recurrent large-scale patterns from random small-scale forcing} \label{sec4.3}
{\mk As motivated at the end of the previous subsection, we analyze here  from a numerical viewpoint, the effects of a stochastic perturbation to the model formulation. In that respect, we adopted  to stochastically perturb only the  $u$-equation in the inviscid primitive equations considered here, which turned out to be enough to illustrate our purpose.} More precisely,
\begin{equation}
 \mbox{d} u+\Big(u\frac{\de u}{\de x}+\omega \frac{\de u}{\de p} +\phi_x\Big) 
 \mbox{d} t= \sigma \eta_D(x,t),
 \label{e:eq_u_noise}
\end{equation}
where $\eta_D(x,t)$ is a random ``bombardment'' over the region 
$\mathcal{D}\simeq[45\mbox{km},75\mbox{km}]\times[0,2000]$ (at the east side of 
the mountain) which takes the following form: 
\begin{equation}\label{Eq_noise_form}
\eta(x,t)=\sum_{j=1}^N\chi_{B(\mathbf{x}_j(t),r_j)} \mbox{d} W_t.
\end{equation}
Here, $W_t$ denotes a one-dimensional Brownian motion and the $N$ centers 
$\mathbf{x}_j(t)$ of the balls $B(\mathbf{x}_j(t),r_j)$ are drawn uniformly in 
$\mathcal{D}$ as $t$ flows.  In practice the radii $r_j$ are chosen to take a 
fixed value $r$ so that  $B(\mathbf{x}_j(t),r) \subset \mathcal{D}$ for all $t$, 
with $r$ to be characteristic of some small spatial scales for the problem at 
hand (in what follows $r\approx 1$ km).  Such a noise term can be argued to 
model physical processes that are not accounted for by the given equations such 
as for instance, the vortices that would arise on the east side of the mountain 
from a sufficiently large initial horizontal component of an eastward wind.

\begin{figure}[!Hbtp]
      \centering    
      \includegraphics[height=0.6\textwidth, width=1\textwidth]{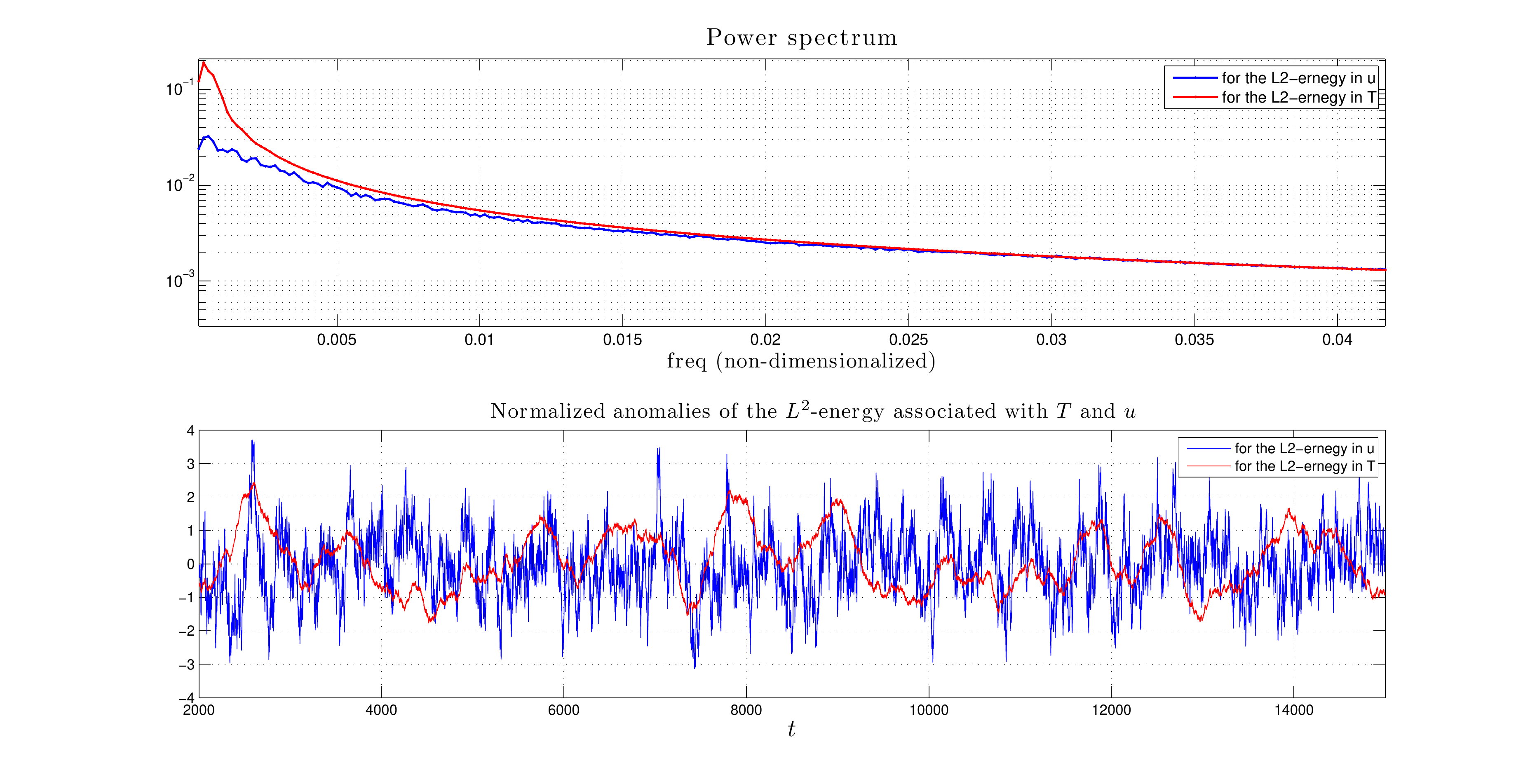}
      \caption{{\footnotesize {\bf Top panel}: Power spectrum (in a 
semilogarithmic scale) of $t\mapsto \|u(\cdot,p,t)\|_{L^2(0,L)}$ (blue curve) 
and of $t\mapsto \|T(\cdot,p,t)\|_{L^2(0,L)}$ (red curve). {\bf Bottom panel}: 
Anomalies of $t\mapsto \|u(\cdot,p,t)\|_{L^2(0,L)}$ and $t\mapsto 
\|T(\cdot,p,t)\|_{L^2(0,L)}$, normalized by their respective (empirical) 
standard deviation.}} 
      \label{fig2_noise}
   \end{figure}   
   
It has been observed that on a spatial resolution of the model corresponding to 
$N_x=N_p=100$,
such a noise term can help trigger interesting dynamics 
such as illustrated in Figure \ref{fig1_noise}. Indeed for $\sigma=0$, the 
system is a stationary regime (steady state) whereas as $\sigma$ starts to increase,  
a complex spatio-temporal dynamics takes place in the $u_{\sigma}$-fields (horizontal velocity with \eqref{e:eq_u_noise}) as well as the $T_{\sigma}$-fields (temperature with \eqref{e:eq_u_noise}); see Figure \ref{fig1_noise} below for $\sigma=5$.\footnote{Less 
interesting dynamics has been observed on the $w$- and $q$-fields which are 
evolving mainly on similar spatio-temporal scales than those of the random 
forcing \eqref{Eq_noise_form} (not shown).}

 \begin{figure}[!Hbtp]
      \centering    
      \includegraphics[height=0.42\textwidth, 
width=.75\textwidth]{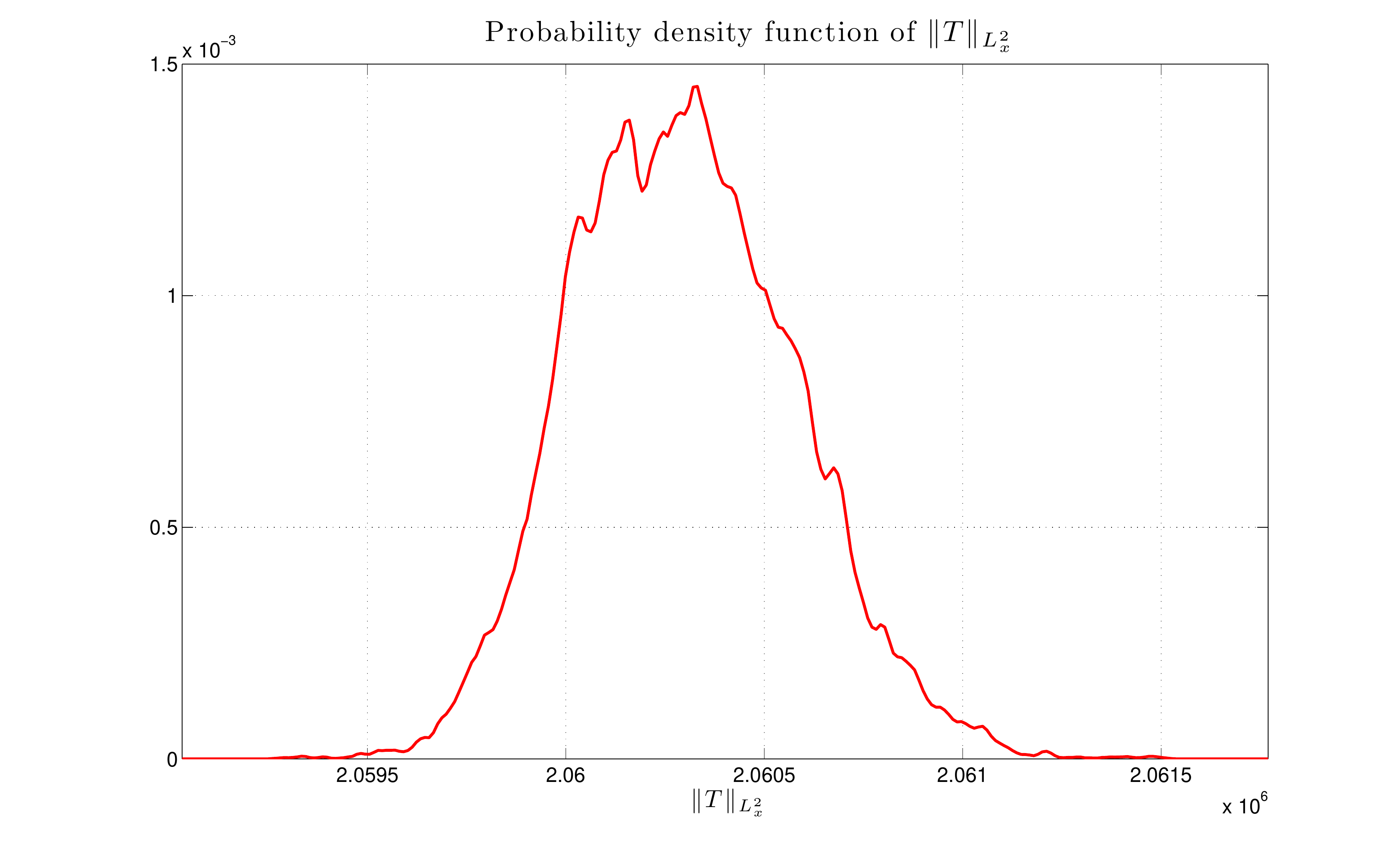}
      \caption{{\footnotesize Empirical probability function of $t\mapsto 
\|T(\cdot,p,t)\|_{L^2(0,L)}$. Non-Gaussian  statistics are observed.}} 
      \label{fig3_noise}
   \end{figure}
 
A closer look at the horizontal velocity and the temperature 
fields\footnote{denoted by $u_{\sigma}$ and $T_{\sigma}$ respectively.} reveals 
that recurrent large-scale patterns, while evolving irregularly in 
time\footnote{and manifesting different characteristics for $u_{\sigma}$ and 
$T_{\sigma}$.}, are now the dominant ingredients of the time evolution of 
these scalar fields. These patterns are mainly expressed here, as waves 
traveling eastward whose amplitude is irregularly modulated as time flows; the 
dominant ``quasi-period'' for the $T_{\sigma}$-field being noticeably larger 
than for the $u_{\sigma}$-field as can be observed on Figure \ref{fig1_noise}.

Without entering in a detailed analysis of the space-time variability of such 
patterns {\mk that could be performed for instance by some} multivariate data-adaptive spectral methods \cite{Ghil_etal02}, 
a simple unidimensional spectral analysis of the time-evolution of the 
$L^2$-energy (in the $x$-direction) contained in the respective fields gives 
already good information about the recurrence characteristics of such 
fields.\footnote{In other words, the choice of the  $L^2$-energy as observable 
allows here to capture key features of the variability of the given 
spatio-temporal fields. We mention that such comments have to be understood 
within the language of the spectral theory of dissipative dynamical systems; see 
 \cite{CNKMG14} for a brief introduction {\mk on the topic}.} In what follows, we will denote 
by $t\mapsto \|u(\cdot,p,t)\|_{L^2(0,L)}$ and  $t\mapsto 
\|T(\cdot,p,t)\|_{L^2(0,L)}$ these respective time-dependent energies. Figure \ref{fig2_noise} below reports on a standard numerical estimation of the power spectrum\footnote{also known as the power spectral density.} associated with the time-variability of these energies, 
such as obtained from their corresponding autocorrelation functions \cite{ER85,Ghil_etal02}. The latter are estimated from the respective anomalies reported on  the bottom panel of Figure \ref{fig2_noise}, after normalization by the standard deviation to plot the curves on a similar order of magnitude.

The numerical results indicate  that the signal $t\mapsto \|T(\cdot,p,t)\|_{L^2(0,L)}$ contains a broadband peak that stands above an exponentially decaying background at low-frequencies within the band $[0,2.5\times10^{-3}]$; see red curve on top panel of Figure \ref{fig2_noise}. 
Associated with the signal $t\mapsto \|u(\cdot,p,t)\|_{L^2(0,L)}$, a broadband peak stands also above  an exponentially decaying background, but with much less energy contained in it and spread over a broader range of frequencies (almost over $[0,5 \times 10^{-3}]$); see blue curve on top panel of Figure \ref{fig2_noise}.

 The fact that the broadband peak associated with the time evolution of\\ 
$\|u(\cdot,p,t)\|_{L^2(0,L)}$ spreads over the range $[2.5\times 
10^{-3},5\times 10^{-3}]$ in the frequency domain, is consistent with the 
higher-frequency time evolution exhibited by the field $u$ compared to the field 
$T$ as can be observed on Figure \ref{fig1_noise} in the space-time domain, as 
well as the higher-frequency  oscillations exhibited by the evolution of 
$\|u(\cdot,p,t)\|_{L^2(0,L)}$ compared to the one of 
$\|T(\cdot,p,t)\|_{L^2(0,L)}$, in the time domain alone; see Figure \ref{fig2_noise} bottom panel.

   \begin{figure}[!Hbtp]
      \centering    
      \includegraphics[height=0.7\textwidth, 
width=1\textwidth]{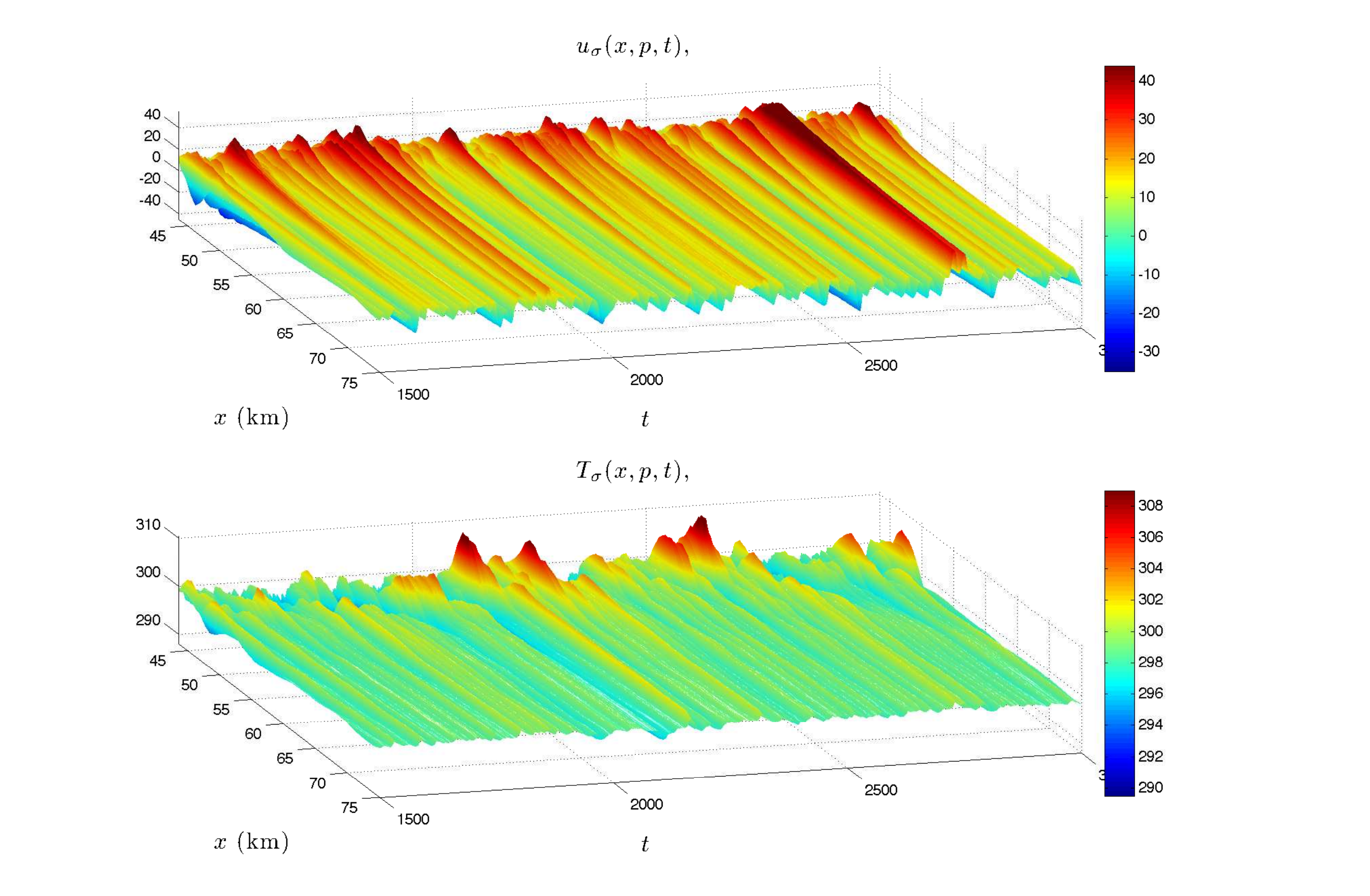}
\caption{{\footnotesize      
Here the topography has been removed (flat domain) while the other parameters of 
the model are kept the same than those used for Figure \ref{fig1_noise}. 
Interestingly, 
these patterns develop much less spatial irregularities as time flows  compared  
to those of Figure \ref{fig1_noise}. The statistics of $t\mapsto 
\|T(\cdot,p,t)\|_{L^2(0,L)}$ are still non-Gaussian (not shown). The latter 
property is here again a signature of some nonlinear effects triggered by the 
noise term \eqref{Eq_noise_form}; nonlinear effects which however give rise to a 
less complex spatial structure of the patterns that develop as time flows, 
compared to the case with topography.}}
\label{figflat_noise}
\end{figure}

The spatio-temporal evolution of $u$ and $T$ exhibit thus recurrent patterns 
that are modulated in time and occur on a multiplicity of scales, whose the 
dominant ones are significantly larger than the spatio-temporal scales on which 
the random forcing act upon. Such a complex dynamical behavior can be argued to 
be consistent with the idea that the noise has triggered some nonlinear effects 
that were not expressed when $\sigma=0$; idea further supported by the non-Gaussian character of the model's dynamics such as observed on the probability density function associated with $t\mapsto \|T(\cdot,p,t)\|_{L^2(0,L)}$; see Figure \ref{fig3_noise}.

Interestingly the characteristics of the large-scale patterns are noticeably different in the case without topography compared to the case with topography, while still resulting from nonlinear effects triggered by the noise; see Figure \ref{figflat_noise}. {\mk To  summarize, the random forcing \eqref{Eq_noise_form}  combined with the numerical scheme developed in this study, allow for a nice illustration of the fact} that the topography is a determining factor for the generation of  complex large-scale patterns in geophysical fluid models.

\section*{\bf Acknowledgements}

This work was supported in part by NSF Grant DMS 1206438, and by the Research Fund of Indiana University.
MDC was partially supported by  the Office of Naval Research grant N00014-12-1-0911.


\begin{thebibliography}{10}

\bibitem{CBT14}
Arthur Bousquet, Michele Coti~Zelati, and Roger Temam.
\newblock Phase transition models in atmospheric dynamics.
\newblock {\em Milan Journal of Mathematics}, pages 1--30, 2014.

\bibitem{BGHL}
Arthur Bousquet, Gung-Min Gie, Youngjoon Hong, and Jacques Laminie.
\newblock A higher order finite volume resolution method for a system related
  to the inviscid primitive equations in a complex domain.
\newblock {\em Numerische Mathematik}, pages 1--31, 2014.

\bibitem{CT2}
Chongsheng Cao and Edriss~S. Titi.
\newblock Global well-posedness of the three-dimensional viscous primitive
  equations of large scale ocean and atmosphere dynamics.
\newblock {\em Ann. of Math. (2)}, 166(1):245--267, 2007.

\bibitem{CNKMG14}
Micka\"el D. Chekroun, J.~D. Neelin, D.~Kondrashov, J.~C. McWilliams, and M.~Ghil.
\newblock Rough parameter dependence in climate models and the role of
  ruelle-pollicott resonances.
\newblock {\em Proceedings of the National Academy of Sciences},
  111(5):1684--1690, 2014.

\bibitem{chekroun2011predicting}
Micka\"el D. Chekroun, D.~Kondrashov, and M.~Ghil.
\newblock Predicting stochastic systems by noise sampling, and application to
  the el ni{\~n}o-southern oscillation.
\newblock {\em Proceedings of the National Academy of Sciences},
  108(29):11766--11771, 2011.

\bibitem{CLRTT07}
Q.~S. Chen, J.~Laminie, A.~Rousseau, R.~Temam, and J.~Tribbia.
\newblock A 2.5{D} model for the equations of the ocean and the atmosphere.
\newblock {\em Anal. Appl. (Singap.)}, 5(3):199--229, 2007.

\bibitem{CST}
Qingshan Chen, Ming-Cheng Shiue, and Roger Temam.
\newblock The barotropic mode for the primitive equations.
\newblock {\em J. Sci. Comput.}, 45(1-3):167--199, 2010.

\bibitem{CSTT}
Qingshan Chen, Ming-Cheng Shiue, Roger Temam, and Joseph Tribbia.
\newblock Numerical approximation of the inviscid 3{D} primitive equations in a
  limited domain.
\newblock {\em ESAIM Math. Model. Numer. Anal.}, 46(3):619--646, 2012.

\bibitem{CTT}
Qingshan Chen, Roger Temam, and Joseph~J. Tribbia.
\newblock Simulations of the 2.5{D} inviscid primitive equations in a limited
  domain.
\newblock {\em J. Comput. Phys.}, 227(23):9865--9884, 2008.

\bibitem{CA}
Alexandre~Joel Chorin.
\newblock A numerical method for solving incompressible viscous flow problems
  [{J}. {C}omput. {P}hys. {\bf 2} (1967), no. 1, 12--36].
\newblock {\em J. Comput. Phys.}, 135(2):115--125, 1997.
\newblock With an introduction by Gerry Puckett, Commemoration of the 30th
  anniversary \{of J. Comput. Phys.\}.

\bibitem{CFTT}
Michele Coti~Zelati, Michel Fr{\'e}mond, Roger Temam, and Joseph Tribbia.
\newblock The equations of the atmosphere with humidity and saturation:
  uniqueness and physical bounds.
\newblock {\em Phys. D}, 264:49--65, 2013.

\bibitem{CT}
Michele Coti~Zelati and Roger Temam.
\newblock The atmospheric equation of water vapor with saturation.
\newblock {\em Boll. Unione Mat. Ital. (9)}, 5(2):309--336, 2012.

\bibitem{ER85}
J.-P. Eckmann and D.~Ruelle.
\newblock Ergodic theory of chaos and strange attractors.
\newblock {\em Reviews of modern physics}, 57(3):617--656, 1985.

\bibitem{farrell1988optimal}
B.~F. Farrell.
\newblock Optimal excitation of perturbations in viscous shear flow.
\newblock {\em Physics of Fluids}, 31(8):2093, 1988.

\bibitem{farrell1996generalized}
B.F. Farrell and P.J. Ioannou.
\newblock Generalized stability theory. part i: Autonomous operators.
\newblock {\em Journal of the atmospheric sciences}, 53(14):2025--2040, 1996.

\bibitem{Ghil_etal02}
M.~Ghil, M.R. Allen, M.D. Dettinger, K.~Ide, D.~Kondrashov, M.E. Mann, A.W.
  Robertson, A.~Saunders, Y.~Tian, F.~Varadi, et~al.
\newblock Advanced spectral methods for climatic time series.
\newblock {\em Reviews of Geophysics}, 40(1), 2002.

\bibitem{GT13}
Gung-Min Gie and Roger Temam.
\newblock Cell centered finite volume discretization method for a general
  domain in $\mathbf{R}^2$ using convex quadrilateral meshes.

\bibitem{GT}
Gung-Min Gie and Roger Temam.
\newblock Cell centered finite volume methods using {T}aylor series expansion
  scheme without fictitious domains.
\newblock {\em Int. J. Numer. Anal. Model.}, 7(1):1--29, 2010.

\bibitem{G}
Adrian~E Gill.
\newblock {\em Atmosphere-ocean dynamics}.
\newblock New York : Academic Press, 1982.

\bibitem{H}
George~J. Haltiner.
\newblock {\em Numerical weather prediction}.
\newblock Wiley New York, 1971.

\bibitem{H2}
George~J. Haltiner and Roger~T. Williams.
\newblock {\em Numerical Prediction and Dynamic Meteorology}.
\newblock Wiley, 2 edition, 5 1980.

\bibitem{KG06}
Georgij~M. Kobelkov.
\newblock Existence of a solution `in the large' for the 3{D} large-scale ocean
  dynamics equations.
\newblock {\em C. R. Math. Acad. Sci. Paris}, 343(4):283--286, 2006.

\bibitem{KG07}
Georgy~M. Kobelkov.
\newblock Existence of a solution ``in the large'' for ocean dynamics
  equations.
\newblock {\em J. Math. Fluid Mech.}, 9(4):588--610, 2007.

\bibitem{L}
Randall~J. LeVeque.
\newblock {\em Finite volume methods for hyperbolic problems}.
\newblock Cambridge Texts in Applied Mathematics. Cambridge University Press,
  Cambridge, 2002.

\bibitem{LTW}
Jacques-Louis Lions, Roger Temam, and Shou~Hong Wang.
\newblock New formulations of the primitive equations of atmosphere and
  applications.
\newblock {\em Nonlinearity}, 5(2):237--288, 1992.

\bibitem{moore1997singular}
A.M. Moore and R.~Kleeman.
\newblock The singular vectors of a coupled ocean-atmosphere model of enso. i:
  Thermodynamics, energetics and error growth.
\newblock {\em Quarterly Journal of the Royal Meteorological Society},
  123(540):953--981, 1997.

\bibitem{OS78}
Joseph Oliger and Arne Sundstr{\"o}m.
\newblock Theoretical and practical aspects of some initial boundary value
  problems in fluid dynamics.
\newblock {\em SIAM J. Appl. Math.}, 35(3):419--446, 1978.

\bibitem{Po}
Joseph Pedlosky.
\newblock {\em Geophysical Fluid Dynamics}.
\newblock Springer, 2nd edition, 4 1992.

\bibitem{penland1995optimal}
C.~Penland and P.D. Sardeshmukh.
\newblock The optimal growth of tropical sea surface temperature anomalies.
\newblock {\em Journal of climate}, 8(8):1999--2024, 1995.

\bibitem{PTZ08}
Madalina Petcu, Roger~M. Temam, and Mohammed Ziane.
\newblock Some mathematical problems in geophysical fluid dynamics.
\newblock In {\em Handbook of numerical analysis. {V}ol. {XIV}. {S}pecial
  volume: computational methods for the atmosphere and the oceans}, volume~14
  of {\em Handb. Numer. Anal.}, pages 577--750. Elsevier/North-Holland,
  Amsterdam, 2009.

\bibitem{reddy1993pseudospectra}
S.~C. Reddy, P.J. Schmid, and D.S. Henningson.
\newblock Pseudospectra of the orr-sommerfeld operator.
\newblock {\em SIAM Journal on Applied Mathematics}, 53(1):15--47, 1993.

\bibitem{RY}
R.~R. Rogers and Man~Kong. Yau.
\newblock {\em A short course in cloud physics}.
\newblock Pergamon Press Oxford ; New York, 3rd edition, 1989.

\bibitem{RTT05}
A.~Rousseau, R.~Temam, and J.~Tribbia.
\newblock Boundary conditions for the 2{D} linearized {PE}s of the ocean in the
  absence of viscosity.
\newblock {\em Discrete Contin. Dyn. Syst.}, 13(5):1257--1276, 2005.

\bibitem{RTT07}
A.~Rousseau, R.~Temam, and J.~Tribbia.
\newblock Numerical simulations of the inviscid primitive equations in a
  limited domain.
\newblock In {\em Analysis and simulation of fluid dynamics}, Adv. Math. Fluid
  Mech., pages 163--181. Birkh\"auser, Basel, 2007.

\bibitem{Sw}
Laurent Schwartz.
\newblock {\em Th\'eorie des distributions}.
\newblock Publications de l'Institut de Math\'ematique de l'Universit\'e de
  Strasbourg, No. IX-X. Nouvelle \'edition, enti\'erement corrig\'ee, refondue
  et augment\'ee. Hermann, Paris, 1966.

\bibitem{T}
Roger Temam.
\newblock {\em Navier-{S}tokes equations}.
\newblock AMS Chelsea Publishing, Providence, RI, 2001.
\newblock Theory and numerical analysis, Reprint of the 1984 edition.

\bibitem{TT05}
Roger Temam and Joseph Tribbia.
\newblock Open boundary conditions for the primitive and {B}oussinesq
  equations.
\newblock {\em J. Atmospheric Sci.}, 60(21):2647--2660, 2003.

\bibitem{trefethen2005spectra}
L.N. Trefethen and M.~Embree.
\newblock {\em Spectra and pseudospectra: the behavior of nonnormal matrices
  and operators}.
\newblock Princeton University Press, 2005.

\bibitem{trefethen1993hydrodynamic}
L.N. Trefethen, A.~Trefethen, S.C. Reddy, T.~Driscoll, et~al.
\newblock Hydrodynamic stability without eigenvalues.
\newblock {\em Science}, 261(5121):578--584, 1993.

\end{thebibliography}
\end{document}